\documentclass[journal]{IEEEtran}
\usepackage[utf8]{inputenc}
\usepackage{amsmath,amssymb,amsthm,bm,mathtools}
\usepackage{algorithm}
\usepackage{algpseudocode}

\usepackage{graphicx}
\usepackage[caption=false,font=normalsize,labelfont=sf,textfont=sf]{subfig}
\usepackage{array}
\usepackage{booktabs}
\usepackage{enumitem}
\usepackage{cite}
\usepackage{hyperref}
\usepackage{xcolor}
\newtheorem{proposition}{Proposition}
\newtheorem{lemma}{Lemma}

\theoremstyle{remark}
\newtheorem{remark}{Remark}

\begin{document}
\title{Rao--Blackwellized Stein Gradient Descent \\ for Joint State--Parameter Estimation}

\author{%
    Milad~Banitalebi~Dehkordi,
    Manas~Mejari,
    ~Dario~Piga%
\thanks{
Milad~Banitalebi~Dehkordi, Manas~Mejari, and Dario~Piga are with the IDSIA Dalle Molle Institute for Artificial Intelligence,
SUPSI-USI, 6962 Lugano, Switzerland (e-mail:  \{milad.banitalebi, manas.mejari, dario.piga\}@supsi.ch)
This preprint is submitted to the IEEE Transactions on Automatic Control.
}%
  
}

\maketitle
\begin{abstract}
We present a  filtering framework for online joint state estimation and parameter identification in nonlinear, time-varying systems. The algorithm uses \emph{Rao–Blackwellization} technique to infer joint state–parameter posteriors efficiently. In particular, conditional state distributions are computed analytically via Kalman filtering, while model parameters—including process and measurement noise covariances—are approximated using particle-based \emph{Stein Variational Gradient Descent} (SVGD), enabling stable real-time inference.
We prove a theoretical consistency result by bounding the impact of the SVGD approximated parameter posterior on state estimates, relating the divergence between the true and approximate parameter posteriors  to the total variation distance between the resulting state marginals.
Performance of the proposed filter is validated on two case studies: a bioreactor with Haldane kinetics and a neural-network-augmented dynamic system. The latter demonstrates the filter’s capacity for online neural network training within a dynamical model, showcasing its potential for fully adaptive, data-driven system identification.
\end{abstract}

\begin{IEEEkeywords}
Rao-Blackwellized Filtering, Stein Variational Gradient Descent, joint State--parameter estimation.
\end{IEEEkeywords}

\section{Introduction}

\IEEEPARstart{A}{ccurate} state estimation     requires a well-specified process model and known measurement noise statistics. In practice, however, model parameters and noise covariances are typically  unknown and may change over time due to changing operating conditions. Such uncertainties can degrade the performance of the classical filters that assume fixed, known models, leading to substantial state estimation errors from model mismatch~\cite{huang2017novel}. This motivates methods that infer the joint posterior over both states and unknown parameters, enabling improved real-time estimates and principled uncertainty quantification essential for robust control, monitoring, and decision-making~\cite{sarkka2023bayesian}.

Traditional approaches for joint state and parameter estimation often rely on \emph{augmented} Extended Kalman Filters (EKF), where the parameters are appended to the state vector and updated with the system states. Although augmented EKF have been widely used \cite{van2001square}, they are known to perform poorly for highly nonlinear systems.  
Moreover, learning additional quantities such as process and measurement noise statistics is numerically delicate and may lead to instability or even filter divergence \cite{ge2016performance}.  Several adaptive Kalman filtering methods have been proposed to estimate noise covariances (see, e.g,~\cite{odelson2006new,karasalo2011optimization,ge2024adaptive}), but they typically suffer from sensitivity to model mismatch and tuning issues.  Beyond joint estimation settings, Kalman-based techniques have also been applied to learning problems involving parameter inference alone. For instance, \cite{bemporad2022rnnkalman} reformulates the training of recurrent neural networks as a state–estimation problem and applies the Extended Kalman Filter (EKF) to obtain second-order–like updates.

Variational Inference (VI) methods can also approximate the joint posterior distribution over states and parameters \cite{archer2015black}, but they are typically computationally intensive and  less suited for real-time  applications. Monte Carlo methods, such as Particle Filters (PFs), offer a more flexible framework for approximating complex, potentially non-unimodal  distributions \cite{hartikainen2013variational, arulampalam2002tutorial}. However, a na\"ive application of PFs to jointly propagate both states and parameters is restrictive for high dimensional systems and often fails to exploit the known structure of the state-space model. Rao--Blackwellized Particle Filters (RBPFs) address this limitation by analytically evaluating the conditional distribution of the states given the parameters and using a Monte Carlo approximation for the parameter distribution \cite{murphy2001rao, mejari2022maximum, Schon, badar2024rao}. This approach leverages the prior knowledge encoded in the state-space model while retaining the flexibility of particle-based methods for parameter inference. 
In standard RBPFs, the temporal evolution of parameters is usually modeled as a stochastic process with additive noise, which is chosen as \emph{proposal distribution} from which the particles are sampled. This  introduces an additional hyperparameter, namely the noise covariance for parameter propagation. This  often requires careful tuning and can affect the accuracy of both state and parameter estimates. 

Recently, Stein Variational Gradient Descent (SVGD) has emerged as a promising deterministic alternative for approximating probability distributions \cite{liu2016stein}. Unlike PFs, which propagate  particles stochastically, weighted based on likelihoods, SVGD evolves particles deterministically along directions that minimize the Kullback--Leibler (KL) divergence between the approximate and true posterior. This  makes SVGD appealing in a Rao--Blackwellized framework, where the states can be estimated analytically (e.g., via the EKF) and the parameter distribution is approximated deterministically using SVGD. Building on this, in this paper we introduce the \emph{Rao--Blackwellized Stein Gradient Descent (RBSGD) filter}, where the measurements  deterministically steer the particles toward the true parameter distribution. This  allows for accurate state estimation, avoiding the need to carefully tune proposal parameters.  To the best of our knowledge, the integration of Rao-Blackwellizion concepts and SVGD for joint state-parameter estimation has not been proposed before. 

In addition, curvature information in the parameter space has been incorporated in variants of SVGD, such as Newton SVGD or L-BFGS SVGD, to improve convergence by adapting the step size according to the local geometry of the likelihood \cite{liu2018riemannian,detommaso2018stein, maken2022stein}. While effective, these approaches typically rely on computing the Hessian of the log-posterior, which introduces significant computational overhead. To address this, in this paper we also propose the \emph{Rao--Blackwellized Fisher Stein Gradient Descent (RBFSGD) filter}, an extension of RBSGD that incorporates the Fisher Information Matrix (FIM) as an indicator of the local parameter space geometry \cite{amari1998natural, hwang2024fadam, gomes2025adafisher}. The FIM is utilized in two ways: (i) in updating the parameters using a Fisher-adapted Adam scheme, and (ii) in the kernel formulation of SVGD, where the standard Euclidean distance is replaced by a Fisher-informed distance in the RBF kernel. This approach improves the scaling and directionality of parameter updates while maintaining computational tractability.

Finally, we analyze the theoretical properties of the proposed RBSGD and RBFSGD filters. In particular, we establish an upper bound on the $\ell_1$-norm error between the approximate and true marginal state distributions. Furthermore, we prove that under mild assumptions, this bound (i) is related to the  KL divergence between the approximate and true parameter posteriors and (ii) monotonically decreases  along the SVGD iterations.

The remainder of this paper is organized as follows. In Section~\ref{sec:problem_statement}, we formulate the joint estimation problem for states and model parameters. Section~\ref{sec:RBPF} reviews Rao--Blackwellized Particle Filters and their application in state-space models. Building on this foundation, Section~\ref{sec:RBSGD} introduces the proposed RBSGD filter, and Section~\ref{sec:RBFSGD} presents its Fisher-informed extension, the RBFSGD filter. Finally, Section~\ref{sec:case_studies} demonstrates the performance of the proposed methods on two case studies:  a highly nonlinear fed-batch bioreactor and a nonlinear system modelled with neural network.

\emph{Notation:} Let $\mathbb{R}^{n}$ denote the set of real vectors with dimension $n$. The sequence of time-indexed vectors $\{u_1,\ldots,u_t\}$ is denoted as $u_{1:t}$ and $z_{t|k}$ denotes the estimate of signal $z$ at time $t$, given the observations up to and including  time $k$. The Dirac delta function centered at $\bar{\theta}$ is denoted by $\delta_{\bar{\theta}}(\theta)$. A random variable $x$ sampled from a Gaussian distribution with mean $\mu $ and covariance $\Sigma$ is denoted by $x \sim \mathcal{N}(\mu,\Sigma)$. The expected value of a random variable is denoted by $\mathbb{E}\left[x \right]$. Given two probability density functions $p$ and $q$, 
the KL divergence is denoted as $\mathrm{KL}(p||q)$, which is   defined as $\mathrm{KL}(p||q) = \int p(x) \ln \frac{p(x)}{q(x)}dx$.

\section{Problem Statement}
\label{sec:problem_statement}
We consider the following discrete-time nonlinear parameter-varying state--space model,
\begin{subequations}
\label{eq:prob_eqn}
\begin{align}
x_{t+1} &= f(x_t, u_t, \theta_{t}) + q_t, \\
y_t &= h(x_t, \theta_{t}) + r_t,
\end{align}
\end{subequations}
where $t \in \mathbb{N}$ denotes the time index,  $x_t \in \mathbb{R}^{n_x}$, $y_t \in \mathbb{R}^{n_y}$, and $u_t \in \mathbb{R}^{n_u}$ denote the system state, output, and input, respectively. The functions $f$ and $h$ describe the nonlinear state transition and measurement mappings and are parameterized by an \emph{unknown} parameter  $\theta_t \in \mathbb{R}^{n_{\theta}}$. The process and measurement noise are modeled as zero-mean Gaussian random variables, $q_t \sim \mathcal{N}(0, Q(\theta_t))$ and $r_t \sim \mathcal{N}(0, R(\theta_t))$, with corresponding positive definite covariance matrices $Q(\theta_t)$ and $R(\theta_t)$ parameterized by $\theta_t$.

Given the sequence of output  measurements $y_{1:t}$ and inputs $u_{1:t}$ up to time $t$, our objective is to estimate the joint posterior distribution over  states and time-varying model parameters $p(x_t, \theta_t |y_{1:t}, u_{1:t})$,  at time $t$. We seek a recursive filter that updates this posterior  online, from previous posterior estimate $p(x_{t-1}, \theta_{t-1}|y_{1:t-1}, u_{1:t-1})$. 
The posterior distribution can be then utilized to compute the point estimates (\emph{e.g.}, maximum a-posteriori (MAP))   of the unknown states $x_t$, and model parameters $\theta_t$, or to provide marginal state (or parameter) posterior 
to quantify the uncertainty at each time instance.

We note that in a  classical Kalman or Bayesian filtering methods, the dynamic models $f$ and $g$, as well as covariance matrices $Q,R$,  are typically assumed to be \emph{known}. The performance of the filtering algorithms is thus sensitive to the  accuracy of the model and choice of  
covariances. The estimation of the model as well as covariances is therefore critical  for accurate state estimation. In this work, we address this problem by joint estimation of the states $x_t$ and unknown parameters (dynamic model and covariances) encoded via the parameter vector $\theta_t$.   

We remark that an exact  evaluation of the posterior $p(x_t, \theta_t| u_{1:t}, y_{1:t})$ is computationally intractable, and closed-form solutions exist only  in specialized cases, \emph{e.g.}, linear time-invariant dynamics with Gaussian noises. In the following sections, we develop \emph{Rao-Blackwellized Particle 
Filtering}  algorithms that compute an \emph{approximation} of the joint posterior  $p(x_t, \theta_t| u_{1:t}, y_{1:t})$ and update it online  as new observations become available.

\section{Rao--Blackwellized filtering for joint state-parameter estimation}
\label{sec:methodologies}

The posterior  $p(x_t, \theta_t | y_{1:t}, u_{1:t})$ can be approximated using Monte Carlo sampling-based methods, which typically consists of generating a sequence of $N$ random samples $\{x^{(i)}_t, \theta^{(i)}_t \}_{i=1}^{N}$,   from a Markov chain, whose stationary distribution is the target posterior $p(x_t, \theta_t| u_{1:t}, y_{1:t})$, which can be then approximated with the following empirical distribution,
\begin{align*}
   p(x_t, \theta_t| u_{1:t}, y_{1:t}) \approx \frac{1}{N}\sum_{i=1}^{N}\delta_{\big(x^{(i)}_t,\theta^{(i)}_t\big)}\big(x_t, \theta_t \big). 
\end{align*}
A potential drawback of this naive implementation is that it requires to draw samples from a high-dimensional joint parameter space $(x, \theta)$, which is computationally restrictive for high dimensions. Instead, in this work, we employ a \emph{Rao-Blackwellization} (RB) approach in which part of the posterior is computed analytically, while only the remaining part is approximated using sampling-based algorithms. The advantage of the RB strategy is that it  reduces the size of the space over which the sampling is required. 

In particular, in the proposed RB filtering framework, after the joint posterior is factorized as,
\begin{equation}\label{eq:factor}
p(x_t, \theta_t | y_{1:t}, u_{1:t}) = p(x_t| \theta_t, y_{1:t}, u_{1:t}) \, p(\theta_t | y_{1:t}, u_{1:t}),
\end{equation}
the conditional distribution $p(x_t| \theta_t, y_{1:t}, u_{1:t})$ is computed analytically via  Kalman filtering, while the marginal posterior over the parameters, $p(\theta_t |y_{1:t}, u_{1:t})$, is approximated using  variants of particle filters. 

Particle filters allow us to treat any type of probability distribution and nonlinearity, without being restricting to linear-Gaussian models. 
In this setting, we introduce the following two variants of the RB particle filters: \emph{Rao--Blackwellized Stein Gradient Descent} (RBSGD) and \emph{Rao--Blackwellized Fisher Stein Gradient Descent} (RBFSGD) filters.  

We first review the standard RB particle filter algorithm, and then build upon this framework to preset RBSGD and RBFSGD filters. 
For brevity, we drop the dependence of the conditional distributions on inputs $u_{1:t}$ in the rest of the paper. 

\subsection*{Rao--Blackwellized Particle Filters}
\label{sec:RBPF}
In  Rao--Blackwellized Particle Filters (RBPFs), \cite{murphy2001rao} 
 a weighted set of $N$  particles $\{ w_t^{(i)}, \theta_t^{(i)}\}_{i=1}^N$, \emph{i.e.}, samples from a \emph{proposal distribution} $q(\cdot)$ and their weights,  are used to approximate the parameter posterior $p(\theta_t |y_{1:t})$ in \eqref{eq:factor}  by the empirical
point-mass distribution as follows, 
\begin{align}\label{eq:parameter_post}
    p(\theta_t |y_{1:t}) \approx \sum_{i=1}^{N} w^{(i)}_t\delta_{\theta^{(i)}_t}(\theta_t), \quad \sum_{i=1}^{N}w^{(i)}_t =1,  
 \end{align}
and for each particle $\theta^{(i)}$ (representing the $i$-th model), an \emph{Extended  Kalman Filter} (EKF) is employed to compute the corresponding conditional state distribution $p(x_t^{(i)} | \theta_t^{(i)}, y_{1:t})$. 
Note that marginalizing out the state variable allows analytic computation of state posterior,   $p(x_t^{(i)} | \theta_t^{(i)}, y_{1:t})$ which is the Gaussian posterior distribution derived from EKF given as,
\begin{align}\label{eq:Kalman_post}
   p(x_t^{(i)} | \theta_t^{(i)}, y_{1:t}) = \mathcal{N}(x^{(i)}_t; \hat{x}^{(i)}_{t|t}, P^{(i)}_{t|t}),  
\end{align}
where  $\hat{x}_{t|t}^{(i)}, P_{t|t}^{(i)}$  denote the  mean and covariance of the state associated with the $i$-th particle respectively. 

RBPFs proceed in three stages: prediction, update, and resampling, as summarized in Algorithm~\ref{alg:rbpf_simple}. At the initial Step~\ref{alg:s0}, the particles and their weights as well as Kalman filter's mean and covariances are initalized according to prior distributions and initial values.  

In the prediction Step~\ref{algo:s1a}-\ref{algo:s1b}, both state and parameter distributions are propagated forward in time. The parameters evolve according to the following stationary stochastic transition model with additive Gaussian noise,
\begin{equation}
\label{eq:theta_transition}
\theta_t^{(i)} = \theta_{t-1}^{(i)} + \eta_t^{(i)}, \quad \eta_t^{(i)} \sim \mathcal{N}(0, \Sigma_\theta),
\end{equation}
where $\Sigma_\theta$ controls the degree of parameter drift, allowing adaptation to time-varying dynamics. 
In common practice,  the parameter proposal distribution $q(\theta_t|\theta^{(i)}_{t-1})$ is taken as the transition model $p(\theta_t|\theta^{(i)}_{t-1})$. Thus, the proposal  $q(\theta_t|\theta^{(i)}_{t-1})$ is  Gaussian,  with mean $\theta^{(i)}_{t-1}$ and covariance $\Sigma_\theta$. For each particle $\theta^{(i)}_t$, an EKF is run for state estimation. The state prediction in the EKF follows the system dynamics by linearizing the model \eqref{eq:prob_eqn} as follows, 
\begin{subequations}
\label{eq:ekf_prediction}
\begin{align}
\hat{x}_{t|t-1}^{(i)} &= f\big(\hat{x}_{t-1|t-1}^{(i)}, u_{t-1}, \theta_{t-1}^{(i)}\big), \\
P_{t|t-1}^{(i)} &= F_{t-1}^{(i)} P_{t-1|t-1}^{(i)} (F_{t-1}^{(i)})^\top + Q(\theta_{t-1}^{(i)}),
\end{align}
\end{subequations}
where $\hat{x}_{t|t-1}^{(i)}$ and $\hat{x}_{t-1|t-1}^{(i)}$ denote the predicted and previous state means,  $P_{t|t-1}^{(i)}$, $P_{t-1|t-1}^{(i)}$ are the corresponding state covariance matrices, and  $F_{t-1}^{(i)} \!=\! \frac{\partial{f}}{\partial{x}}$ is the Jacobian of the dynamics $f$, evaluated at $\hat{x}_{t-1|t-1}^{(i)}$ and $\theta_{t-1}^{(i)}$.

Upon receiving the observation $y_t$, the Kalman gain for each particle is computed as
\begin{equation}
\label{eq:kalman_gain}
K_t^{(i)} = P_{t|t-1}^{(i)} (H_t^{(i)})^\top \big( H_t^{(i)} P_{t|t-1}^{(i)} (H_t^{(i)})^\top + R(\theta_{t-1}^{(i)}) \big)^{-1},
\end{equation}
where  Jacobian $H_t^{(i)} \!=\! \frac{\partial{h}}{\partial{x}}$ of the measurement model $g$ is evaluated at $\hat{x}_{t|t-1}^{(i)}$ and $\theta_{t-1}^{(i)}$, and the innovation covariance is defined as
\begin{equation}
S_t^{(i)} = H_t^{(i)} P_{t|t-1}^{(i)} (H_t^{(i)})^\top + R(\theta_{t-1}^{(i)}).
\end{equation}

Next, in the update Step~\ref{algo:s2a}, the mean  and covariance update of the state $x^{(i)}_t$, which form the Gaussian posterior \eqref{eq:Kalman_post}
are computed  as,
\begin{subequations}
\label{eq:ekf_update}
\begin{align}
\hat{x}_{t|t}^{(i)} &= \hat{x}_{t|t-1}^{(i)} + K_t^{(i)} \big( y_t - h(\hat{x}_{t|t-1}^{(i)}, \theta_{t-1}^{(i)}) \big), \\
P_{t|t}^{(i)} &= \big( I - K_t^{(i)} H_t^{(i)} \big) P_{t|t-1}^{(i)}.
\end{align}
\end{subequations}

In the particle weight update Step~\ref{algo:s2b}, the importance weights are updated according to the likelihood of the new measurement as~\footnote{In the derivation of  RBPF, we typically work on the full trajectory posterior $p(x_{0:t}, \theta_{0:t}| y_{1:t})$ and then marginalize to get $p(x_t,\theta_t|y_{1:t}).$  },
\begin{subequations}
\begin{align}
\label{eq:weight_update}
&w_t^{(i)} \propto w_{t-1}^{(i)} \, p\big(y_t \mid \theta_{1:t}^{(i)}, y_{1:t-1}\big) \frac{p(\theta^{(i)}_t|\theta^{(i)}_{t-1})}{q(\theta^{(i)}_t|\theta^{(i)}_{t-1})}, \\
\label{eq:liklihood}
&p\big(y_t \mid  \theta_{1:t}^{(i)}, y_{1:t-1}\big) = \mathcal{N}\big(y_t; h(\hat{x}_{t|t-1}^{(i)}, \theta_{t}^{(i)}), S_t^{(i)}\big).
\end{align}
\end{subequations}


To prevent particle degeneracy, \emph{resampling} Step~\ref{algo:s3} is performed when the effective sample size (ESS) falls below a given threshold $n_{\text{thr}}$. Particles with higher weights are replicated by sampling from a new probability distribution $\theta \sim \sum_{j=1}^{N}w^{(i)}_t\delta_{\theta_{t}^{(j)}}(\theta)$, while those with negligible weights are discarded, and all resampled particles are assigned uniform weights, $w_t^{(i)} = 1/N$, preserving a consistent particle population.  

In summary, the RBPF Algorithm~\ref{alg:rbpf_simple} produces at each time step $t$ a set of weighted samples $\{w^{(i)}_t, \theta^{(i)}_t, \hat{x}^{(i)}_{t|t}, P^{i}_{t|t} \}_{i=1}^{N}.$ 
Then, from \eqref{eq:parameter_post} and \eqref{eq:Kalman_post}, the joint posterior in \eqref{eq:factor} is approximated with RBPF as,
\begin{align}\label{eq:approx_posterior}
    p(x_t, \theta_t|y_{1:t}) \approx \sum_{i=1}^{N} w^{(i)}_t \delta_{\theta^{(i)}_t}(\theta_t)\mathcal{N}(x_t;\hat{x}^{(i)}_{t|t}, P^{(i)}_{t|t}).
\end{align}

\begin{algorithm}[t!]
\caption{Rao--Blackwellized Particle Filter}
\begin{algorithmic}[1]
\Require Number of particles $N$, priors for intital parameter and state: $p(\theta_0), p(x_0)$; intial state covariance $\bar{P}_0$; proposal distribution $q(\theta_t|\theta_{t-1})$, resampling threshold $n_{\text{thr}}$.
\State \label{algo:s0} \textbf{Initialization:}\label{alg:s0}
\Statex \textbf{for} {$i = 1$ to $N$}
    \Statex \quad Sample $\theta_0^{(i)} \sim p(\theta_0)$
    \Statex \quad Set $\hat{x}_0^{(i)} \sim p(x_{0}),\, P_0^{(i)} = \bar{P}_{0}$   \Comment{Initial KF state}
    \Statex \quad $w_0^{(i)} \leftarrow 1/N$
\Statex \textbf{end for} 
\Statex \textbf{for} {$t = 1$ to $T$}
\Statex \quad \textbf{for} {$i = 1$ to $N$}
\State \label{algo:s1a} \quad \quad\textbf{Particle prediction:}
         $\theta_t^{(i)} \sim q(\theta_t \mid \theta_{t-1}^{(i)})$ \eqref{eq:theta_transition} 
 \State \label{algo:s1b} \quad \quad \textbf{Kalman prediction:} Compute   $\hat{x}_{t|t-1}^{(i)}$, $P_{t|t-1}^{(i)}$~\eqref{eq:ekf_prediction}
 \State \label{algo:s2b} \quad \quad  \textbf{Weight update:}
       $\tilde{w}_t^{(i)}$ as in \eqref{eq:weight_update}
\State \label{algo:s2a} \quad  \quad \textbf{Kalman  update:}
        Compute 
       $\hat{x}_t^{(i)}$, $P_t^{(i)}$: eq.~\eqref{eq:ekf_update}
\Statex \quad \textbf{end for}
\Statex \quad \quad \ {Normalize} $w_t^{(i)} = 
           \tilde{w}_t^{(i)} \big/ \sum_{j=1}^N \tilde{w}_t^{(j)}, \ i=1,\ldots, N.$  
        \State \quad \quad \label{algo:s3} \textbf{Resample} particles according to the weight $w_t^{(i)}$ if $\mathrm{ESS} = 1 / \sum_{i=1}^N (w_t^{(i)})^2 < n_{\text{thr}}$
        \Statex \quad \quad \quad \textbf{for} {$i = 1$ to $N$}
            \Statex \quad \quad \quad \quad \textbf{resample} $\theta_t^{(i)}$ from the pdf $\sum_{j=1}^{N}w^{(j)}_t\delta_{\theta_t^{(j)}}(\theta^{(i)}_t) $
            \Statex \quad \quad \quad \quad \textbf{set} $w_t^{(i)} \leftarrow 1/N$
\Statex \quad \quad \quad \textbf{end for} 
\Statex \textbf{end for} 
\Statex \Return $\{\{\hat{x}_{t|t}^{(i)}, P_{t|t}^{(i)}, \theta_t^{(i)}\}_{i=1}^N\}_{t=1}^T$
\end{algorithmic}
\label{alg:rbpf_simple}
\end{algorithm}

\section{Rao--Blackwellized Filtering with Stein Gradient Descent}

The RBPF framework discussed in the previous section combines analytical filtering for state estimation with a particle filters approximation for parameter inference, achieving a balance between computational tractability and flexibility in representing non-Gaussian parameter distributions. 

However, the performance of the algorithm critically depends on choosing an appropriate  proposal distribution $q(\theta_t|\theta_{t-1})$, which  can be chosen based on the   stochastic transition model \eqref{eq:theta_transition} for the parameters. This, in turn, 
requires
careful tuning of the  noise covariance $\Sigma_\theta$, to reflect the  temporal variability in the parameters. A high-variance proposal spreads particles into low-probability regions, causing weight degeneracy, while a low-variance proposal restricts exploration and yields an  overly concentrated and inaccurate posterior approximation.
Determining the optimal proposal parameters for a specific target posterior distribution is therefore nontrivial in particle filtering. In addition, accurate approximation of the parameter posterior \eqref{eq:parameter_post} may require a large number of particles, which can lead to significant computational overhead. 

To address these challenges, we adopt a \emph{deterministic}  \emph{Stein Variational Gradient Descent} (SVGD) \cite{liu2016stein} framework, where particles evolve via a gradient-based update that minimizes the Kullback–Leibler (KL) divergence between the true posterior and its particle approximation. This automatically balances particle exploration and concentration in high-posterior-density regions, eliminating the need to tune the variance parameter $\Sigma_{\theta}$ to trade off these effects.  
In the next sections, we introduce two variants of RBPFs where the SVGD method~\cite{liu2016stein} is employed to govern the particle evolution and approximation of the parameter posterior $p(\theta_t|y_{1:t})$ within a Rao-Blackwellization framework, offering an efficient alternatives to the classical RBPFs.  

\subsection{Rao--Blackwellized Stein Gradient Descent Filter}
\label{sec:RBSGD}

 In this section, we discuss how to  integrate SVGD into the Rao-Blackwellization framework with a Kalman filter for joint  state-parameter posterior computation. 

The SVGD~\cite{liu2016stein} is a deterministic, particle-based variational inference method for approximating posterior distributions. In SVGD, instead of a  stochastic transition model as in $\eqref{eq:theta_transition}$, 
the particles $\theta^{(i)}_t$  
evolve via a specific \emph{transport map}, \emph{i.e.}, a small perturbation of the identity map: $ \theta + \epsilon \phi(\theta)$, where $\epsilon$ is a small constant and
$\phi$ is a smooth function.  
This transformation iteratively transport   proposal distribution $\theta \sim q(\theta) = \sum_{i=1}^{N}\delta_{\theta^{(j)}}(\theta)$ of the particles toward the target posterior $p(\theta|y_{1:t})$. 
Specifically, SVGD considers the following  transformation of the particles:
\begin{equation}
\label{eq:svgd_transform}
T_{\epsilon \phi}:\theta^{(i)} \mapsto \theta^{(i)} + \epsilon \, \phi_{q,p}^\ast(\theta^{(i)}),
\end{equation}
where $\epsilon >0$ is a step size representing perturbation magnitude and ${\phi}_{q,p}^{\ast}(\cdot) \in \mathcal{H}$ is   the optimal perturbation direction that  lies in a vector-valued Reproducing Kernel Hilbert Space $\mathcal{H}$ (RKHS\footnote{The RKHS of a positive definite kernel $k(x,x')$ is the closure of the linear span $\{f: f(x) = \sum_{i=1}^{m}a_i k(x,x_i), a_i \in \mathbb{R}\}$, equipped with inner products $\langle f,g \rangle_{\mathcal{H}} = \sum_{ij}a_i b_j k(x_i, x_j)$ for $g(x) = \sum_ib_ik(x,x_i)$.}) with associated positive definite kernel  $k(\theta, \theta')$. 
We next   explain the intuition behind ${\phi}_{q,p}^\ast(\cdot)$ and show that it 
maximally decreases the KL divergence between the proposal $q$ and target posterior $p$.  
The optimal perturbation direction ${\phi}_{q,p}^{\ast}(\theta)$ in \eqref{eq:svgd_transform} (which depends on both proposal and target distributions $q,p$  as emphasized in the subscript), is given by  the following expectation~\cite{liu2016stein},
\begin{equation}
\label{eq:svgd_phi_pi}
\phi_{q,p}^\ast(\theta) = 
\mathbb{E}_{\theta' \sim q(\theta)} \Big[
k(\theta', \theta) \, \nabla_{\theta'} \log p(\theta'|y_{1:t}) 
+ \nabla_{\theta'} k(\theta', \theta)
\Big],
\end{equation}
where $k(\theta,\theta')$ is a user-defined positive-definite kernel, \emph{e.g.}, the radial basis function $k(\theta, \theta') = \exp\!\big(\frac{-1}{h}\|\theta \!-\! \theta'\|^2\big)$, and $\nabla_{\theta^{'}}\log p(\theta'|y_{1:t})$ is the gradient of the log-posterior evaluated at  $\theta^{'}$. In \eqref{eq:svgd_phi_pi}, the first term drives the particles toward regions of high posterior probability, while the second term introduces repulsion to maintain particle spread, preventing all particle to collapse into local modes of the posterior. This automatically balances particle exploration with concentration in high-density areas.


It can be proved that  $\phi_{q,p}^\ast$ in \eqref{eq:svgd_phi_pi} is the optimal perturbation direction that gives the steepest descent on the 
KL divergence between proposal and target posterior. The result is formally stated in the following lemma from~\cite{liu2016stein}. 

\begin{lemma}\label{thm:svgd} Consider the transformation in \eqref{eq:svgd_transform}$: T_{\epsilon \phi}(\theta) = \theta + \epsilon \phi(\theta)$ and transformed particles $z = T_{\epsilon \phi}(\theta)$ with $\theta \sim q(\theta)$. Let $q_{[T_{\epsilon \phi}]}(z)$ denote the probability density of the particles after this transformation\footnote{The  density for $z = T(\theta)$ is  $q_{[T]}(z)= q(T^{-1}(z))\cdot | \mathrm{det}((\nabla_zT^{-1}(z))|$.}. Then at the population limit ($N \to \infty$), the  perturbation direction $\phi_{q,p}^\ast$ in \eqref{eq:svgd_phi_pi} is,
\begin{equation}
\label{eq:phi_opt_pi}
\phi_{q,p}^\ast = \arg\max_{\phi \in \mathcal{H}, \|\phi \|_{\mathcal{H}}\leq 1} \left\{ -\nabla_{\epsilon} \mathrm{KL}(q_{[T_{\epsilon\phi}]} \,\|\, p) \Big|_{\epsilon = 0} \right\},
\end{equation}
i.e., $\phi_{q,p}^{\ast}$ is the  direction of the steepest descent that maximizes the negative  gradient of the KL term.      
\end{lemma}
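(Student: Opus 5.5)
The argument has two parts. First, compute the directional derivative of $\mathrm{KL}(q_{[T_{\epsilon\phi}]}\,\|\,p)$ at $\epsilon=0$ and identify it with a Stein operator. Then use the reproducing property to solve the constrained maximization over the unit ball of $\mathcal{H}$ in closed form.

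\textbf{Step 1: rewrite the KL in the original variable.} The KL divergence is invariant under a simultaneous change of variables, so
\begin{equation*}
\mathrm{KL}(q_{[T]}\,\|\,p) = \mathrm{KL}(q\,\|\,p_{[T^{-1}]}),
\qquad
p_{[T^{-1}]}(\theta) = p(T(\theta))\,|\det \nabla_\theta T(\theta)|.
\end{equation*}
This holds for $\epsilon$ small enough that $T_{\epsilon\phi}$ is a diffeomorphism, which is guaranteed when $\phi$ is smooth with bounded derivative. Consequently
\begin{equation*}
\mathrm{KL}(q_{[T]}\,\|\,p) = \mathrm{const} - \mathbb{E}_{q}\big[\log p(\theta+\epsilon\phi(\theta)) + \log|\det(I+\epsilon\nabla\phi(\theta))|\big].
\end{equation*}

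\textbf{Step 2: differentiate at $\epsilon=0$.} Use $\frac{d}{d\epsilon}\log\det(I+\epsilon A)|_{0}=\mathrm{tr}(A)$ and interchange derivative and expectation, which is justified by dominated convergence under mild integrability assumptions. This gives
\begin{equation*}
-\nabla_\epsilon \mathrm{KL}(q_{[T_{\epsilon\phi}]}\,\|\,p)\big|_{\epsilon=0} = \mathbb{E}_{q}\big[\nabla\log p(\theta)^\top\phi(\theta) + \mathrm{tr}(\nabla\phi(\theta))\big] =: \mathbb{E}_q[\mathcal{A}_p\phi(\theta)],
\end{equation*}
which is the Stein operator applied to $\phi$.

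\textbf{Step 3: represent the functional in the RKHS.} Take $\mathcal{H}=\mathcal{H}_0^{n_\theta}$, the product of scalar RKHSs with kernel $k$. By the reproducing property, $\phi_\ell(\theta)=\langle\phi_\ell,k(\theta,\cdot)\rangle$ and $\partial_{\theta_\ell}\phi_\ell(\theta)=\langle\phi_\ell,\partial_{\theta_\ell}k(\theta,\cdot)\rangle$. Moving the expectation inside the inner product gives
\begin{equation*}
\mathbb{E}_q[\mathcal{A}_p\phi]=\langle \phi,\phi^*_{q,p}\rangle_{\mathcal{H}},
\end{equation*}
with $\phi^*_{q,p}$ exactly as in \eqref{eq:svgd_phi_pi}. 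This step requires $\phi^*_{q,p}\in\mathcal{H}$, i.e.\ Bochner integrability of the feature maps under $q$, for instance $\mathbb{E}_q\big[k(\theta,\theta)\,\|\nabla\log p(\theta)\|^2\big]<\infty$ together with a bounded kernel derivative.

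\textbf{Step 4: maximize.} By Cauchy--Schwarz, the maximum of $\langle\phi,\phi^*_{q,p}\rangle_{\mathcal{H}}$ over $\|\phi\|_{\mathcal{H}}\le1$ is attained at $\phi^*_{q,p}/\|\phi^*_{q,p}\|_{\mathcal{H}}$. The maximizing direction is therefore $\phi^*_{q,p}$; the statement \eqref{eq:phi_opt_pi} identifies the optimizer up to this positive normalization, which only rescales $\epsilon$.

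\textbf{Expected obstacle.} The main difficulty is the rigorous justification in Steps 1--3, not the algebra. One must ensure $T_{\epsilon\phi}$ is invertible for small $\epsilon$, exchange differentiation with expectation, and show that the expectation of RKHS-valued features defines an element of $\mathcal{H}$. I would handle these with standing regularity assumptions: a smooth bounded kernel (such as the RBF kernel), $q$ with finite second moments of $\nabla\log p$, and $\phi$ Lipschitz. With these in place, the remaining steps reduce to the computations sketched above, following \cite{liu2016stein}. In the population limit $N\to\infty$, $q$ is a genuine density, so the change-of-variables formula in the footnote applies.
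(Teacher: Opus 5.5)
Your argument is correct and is essentially the proof the paper defers to in \cite{liu2016stein}: Steps~1--2 are their Theorem~3.1, which gives the derivative as $-\mathbb{E}_q[\mathrm{trace}(\mathcal{A}_p\phi)]$, and Steps~3--4 are their Lemma~3.2, via the reproducing property and Cauchy--Schwarz. You are also right that the maximizer over the unit ball is $\phi^\ast_{q,p}/\|\phi^\ast_{q,p}\|_{\mathcal{H}}$ rather than $\phi^\ast_{q,p}$ itself; this is how Lemma~3.2 is stated in \cite{liu2016stein}, and it corrects the normalization that the lemma as written here omits.
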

We refer the reader to \cite[Theorem 3.1, Lemma 3.2]{liu2016stein} for details and proofs.

Lemma~\ref{thm:svgd} states that the optimal perturbation direction $\phi_{q,p}^{\ast}$ maximally decreases the KL divergence between the  transformed proposal $q$  after the infinitesimal transformation of the particles $\theta \mapsto \theta + \epsilon \phi(\theta)$ and the target posterior $p$.  This result allows for an \emph{iterative} procedure that transforms initial proposal  $q_0(\theta)$ towards the target $p(\theta)$: starting from applying the transform $T_0(\theta) = \theta + \epsilon \phi^{\ast}_{q_0,p}$ on $q_0$ decreases the KL divergence. This would give a new distribution $q_{1}$, on which again applying the transformation $T_1(\theta) = \theta + \epsilon \phi^{\ast}_{q_1,p}$ would further decrease the KL divergence. Repeating this process results in a sequence of distributions $\{q_m\}_{m=1}^{M}$ between $q_0$ and $p$ via $T_{m}(\theta) = \theta + \epsilon \phi^{\ast}_{q_m,p}(\theta)$ for $m=1,\ldots, M$. 
The following proposition establishes that, under standard smoothness assumptions, each SVGD step decreases the KL
divergence to the target posterior for sufficiently small step sizes.

\begin{proposition}
\label{thm:svgd_one_step_formal}
Assume that target distribution $\pi_t(\theta) =p(\theta_t|y_{1:t})$ at time $t$ is twice continuously differentiable with bounded Hessian
and that the SVGD kernel $k(\theta_t,\theta_t')$ is bounded and smooth. Consider a sequence of proposal distributions   $\{q_{t,m}\}_{m\geq 0}$  where $q_{t,m}$ denotes the proposal distribution at time $t$ for the $m$-th iteration.   
Then at the population limit ($N \to \infty$), when $q_{t,m-1}\ne \pi_t$, there exists  $\epsilon_0>0$, such that for all  step sizes  $\epsilon: 0<\epsilon<\epsilon_0$,
\begin{align}
\label{eq:KL_statement}
\mathrm{KL}\!\left(q_{t,m}\,\|\,\pi_t\right)
< 
\mathrm{KL}\!\left(q_{t,m-1}\,\|\,\pi_t\right). 
\end{align}
\end{proposition}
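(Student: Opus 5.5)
The plan is a first-order expansion of $\epsilon\mapsto \mathrm{KL}(q_{[T_{\epsilon\phi}]}\,\|\,\pi_t)$ around $\epsilon=0$, with $q_{t,m}=q_{[T_{\epsilon\phi^\ast}]}$ and $\phi^\ast=\phi^\ast_{q_{t,m-1},\pi_t}$. We show that the derivative at $0$ is strictly negative and that the second derivative is uniformly bounded on a small interval. Strict descent for small $\epsilon$ then follows from Taylor's theorem with remainder.

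\emph{Step 1 (first variation).} Following \cite[Theorem 3.1]{liu2016stein}, change variables $z=T_\epsilon(\theta)$. This gives
\[
\mathrm{KL}(q_{[T_\epsilon]}\|\pi_t)=\mathrm{KL}(q_{t,m-1}\,\|\,\pi_{t,[T_\epsilon^{-1}]}),
\]
so that
\[
\mathrm{KL}(q_{[T_\epsilon]}\|\pi_t)=\mathrm{const}-\mathbb{E}_{q_{t,m-1}}\big[\log\pi_t(\theta+\epsilon\phi(\theta))+\log|\det(I+\epsilon\nabla\phi(\theta))|\big].
\]
Differentiating at $\epsilon=0$ yields $-\mathbb{E}_{q_{t,m-1}}[\mathcal{A}_{\pi_t}\phi]$, where $\mathcal{A}_{\pi}\phi=\nabla\log\pi^\top\phi+\nabla\!\cdot\phi$ is the Stein operator. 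By Lemma~\ref{thm:svgd} and the reproducing property, with $\phi=\phi^\ast$ this equals
\[
-\|\phi^\ast\|_{\mathcal H}^2=-\mathbb{S}(q_{t,m-1},\pi_t)^2,
\]
which is the squared kernelized Stein discrepancy.

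\emph{Step 2 (strict negativity).} We need $\mathbb{S}(q_{t,m-1},\pi_t)>0$ whenever $q_{t,m-1}\neq\pi_t$. I would invoke the characterization result \cite{liu2016stein} (via Liu--Lee--Jordan's KSD). For a strictly positive definite, $C_0$-universal smooth bounded kernel, such as the RBF kernel, the KSD vanishes only if $q=\pi$. This step implicitly uses integrability assumptions, namely $\mathbb{E}_q\|\nabla\log\pi_t\|^2<\infty$ and boundary terms vanishing. I would state these as part of the standard smoothness hypotheses.

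\emph{Step 3 (second-order control), the main obstacle.} We need a uniform bound
\[
\Big|\tfrac{d^2}{d\epsilon^2}\mathrm{KL}(q_{[T_\epsilon]}\|\pi_t)\Big|\le C \quad\text{for } |\epsilon|\le\epsilon_1 .
\]
The second derivative equals
\[
-\mathbb{E}_q\big[\phi^\top\nabla^2\log\pi_t(\theta+\epsilon\phi)\,\phi\big]+\mathbb{E}_q\,\mathrm{tr}\big[((I+\epsilon\nabla\phi)^{-1}\nabla\phi)^2\big].
\]
Boundedness of $k$ and its derivatives, together with the finite second moment of the score under $q$, gives $\sup_\theta\|\phi^\ast\|<\infty$ and $\sup_\theta\|\nabla\phi^\ast\|<\infty$. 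For $\epsilon_1<1/(2\sup\|\nabla\phi^\ast\|)$, the map $T_\epsilon$ is a diffeomorphism and $(I+\epsilon\nabla\phi^\ast)^{-1}$ is bounded. The Hessian bound on $\log\pi_t$ (this is how I read the "bounded Hessian" hypothesis) controls the first term. Hence the constant $C$ is finite.

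\emph{Step 4 (conclusion).} Taylor expansion gives
\[
\mathrm{KL}(q_{t,m}\|\pi_t)\le\mathrm{KL}(q_{t,m-1}\|\pi_t)-\epsilon\,\mathbb{S}^2+\tfrac{C}{2}\epsilon^2 .
\]
Taking $\epsilon_0=\min(\epsilon_1,2\mathbb{S}^2/C)$ yields the strict inequality \eqref{eq:KL_statement} for all $0<\epsilon<\epsilon_0$.
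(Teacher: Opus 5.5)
Your proposal is correct and follows the same route as the paper's proof. Both arguments expand $\epsilon\mapsto\mathrm{KL}(q_{t,m}(\epsilon)\|\pi_t)$ to second order, identify the first derivative at $\epsilon=0$ as $-\mathbb{S}(q_{t,m-1},\pi_t)^2$ via \cite[Lemma 3.2]{liu2016stein}, bound the second derivative by some $C$, and take $\epsilon_0$ of order $2\mathbb{S}^2/C$. Your version is more careful than the paper's on two points the paper silently assumes: the characterizing-kernel and integrability conditions needed for $\mathbb{S}>0$ whenever $q_{t,m-1}\neq\pi_t$, and the restriction $\epsilon\le\epsilon_1$ that keeps $T_\epsilon$ invertible and the second derivative uniformly bounded.
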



We refer to Appendix~\ref{proof:svgd_one_step_formal} for the proof of Proposition~\ref{thm:svgd_one_step_formal}.


The result in Proposition~\ref{thm:svgd_one_step_formal} allows to formulate the following iterative update of the particles at time $t$, 
\begin{equation}
\label{eq:svgd_update}
\theta^{(i)}_{t,m} = \theta^{(i)}_{t,m-1} + \epsilon \, \hat{\phi}^\ast(\theta^{(i)}_{t,m-1}),
\end{equation}
where $\theta^{(i)}_{t,m}$ denotes the $i$-th particle at iteration $m$, at time $t$ and $\hat{\phi}^{\ast}(\cdot)$ is an \emph{empirical approximation} of  the perturbation direction in \eqref{eq:svgd_phi_pi}, 
computed based on averages taken over $N$ particles as follows,
\begin{align}
\hat{\phi}^\ast(\theta) = 
\frac{1}{N} \sum_{j=1}^N &
\big[k(\theta^{(j)}_{t,m},\theta)\,\nabla_{\!\theta^{(j)}_{t,m}} \log p(\theta^{(j)}_{t,m}|y_{1:t}) \notag\\
&\quad  \quad  + \nabla_{\!\theta^{(j)}_{t,m}} k(\theta^{(j)}_{t,m},\theta)\big].
\label{eq:svgd_empirical}
\end{align}
Note that the target posterior is unknown due to its intractable normalization constant, however, as the gradient of the log-posterior is unaffected by additive constants, $\nabla_{\!\theta^{(j)}_{t,m}} \log p(\theta^{(j)}_{t,m}|y_{1:t})$ in \eqref{eq:svgd_empirical} can be computed 
directly from the unnormalized posterior as follows\footnote{dropping the time and iteration index $t,m$ for brevity},
\begin{align*}
\nabla_{\theta} \log p(\theta|y_{1:t}) 
\!=\! \nabla_{\theta} \log p(y_t| \theta, y_{1:t\!-\!1}) 
\!+\! \nabla_{\theta} \log p(\theta|y_{1:t\!-\!1}),
\end{align*}
where  the first term $p(y_t|\theta,y_{1:t-1})$ is the marginal likelihood computed from the Kalman filter iterations as in \eqref{eq:liklihood}, and the second term is the gradient of the log-posterior computed at the previous time step $t-1$. 
Thus, at the next SVGD iteration $m$, the particles   are updated via the empirical version of the deterministic transformation $T_m(\theta^{(i)}_{t,m}) := \theta_{t,m-1}^{(i)} + \epsilon \hat{\phi}^{\star}(\theta^{(i)}_{t,m-1})$ as in \eqref{eq:svgd_update} with perturbation direction as in  \eqref{eq:svgd_empirical}.



We now integrate the SVGD in our Rao-Blackwellization framework to approximate the joint posterior $p(x_t,\theta_t|y_{1:t})$. The iterative algorithm is referred to as \emph{Rao-Blackwellized Stein Gradient Descent} (RBSGD). The RBSGD procedure at time step $t$ is summarized in Algorithm~\ref{alg:RBSGD}. 
In the RBSGD Algorithm~\ref{alg:RBSGD}, at each time step, the conditional state mean and covariance are first  updated via the EKF, followed by a deterministic SVGD update of the parameter particles for $M$ iterations.
 In particular,  in the Kalman prediction and update steps   state mean and covariance $\hat{x}_{t|t}, P_{t|t}$ are computed.  Next, starting from the particle intialization $\theta^{(i)}_{t-1}$ obtained at $t-1$,  the  particles $\{\theta_t^{(i)}\}_{i=1}^{N}$ are iteratively updated  according to the SVGD transformation \eqref{eq:svgd_update} for $M$ iterations, where the perturbation direction $\hat{\phi}^{\ast}$   depends on the gradient of the log-likelihood of the Kalman filter as in~\eqref{eq:liklihood} and a user-chosen kernel function $k(\theta, \theta')$.
 For each particle, the conditional state distribution $p(x_t^{(i)}| \theta_t^{(i)}, y_{1:t})$ is computed analytically as in~\eqref{eq:Kalman_post} using an EKF. Consequently, the RBSGD maintains the Rao–Blackwellized decomposition, but replaces the stochastic propagation of parameters with a deterministic SVGD-based update, thereby eliminating the need for manual tuning of $\Sigma_\theta$ in \eqref{eq:theta_transition}. The resulting algorithm combines the analytical efficiency of Kalman filtering with the deterministic convergence properties of SVGD, offering improved sample efficiency and stability in tracking time-varying parameters.

\begin{algorithm}[h!]
\caption{Rao--Blackwellized Stein Gradient Descent Filter}
\begin{algorithmic}
\Require Measurement $y_t$, State mean and covariance $\{\hat{x}_{t-1|t-1}^{(i)}, P_{t-1|t-1}^{(i)}\}_{i=1}^N$, previous particles $\{\theta_{t-1}^{(i)}\}_{i=1}^N$, RBF kernel $k(\theta,\theta')$, SVGD step size $\epsilon >0$
\For{$i = 1$ to $N$}
    \State \textbf{Kalman Prediction:} $\hat{x}_{t|t-1}^{(i)}$, $P_{t|t-1}^{(i)}$ using \eqref{eq:ekf_prediction}
    \State \textbf{Kalman Update:} $\hat{x}_{t|t}^{(i)}$, $P_{t|t}^{(i)}$ using \eqref{eq:ekf_update}
 \State \textbf{Particle Update:} 
  \State  $\theta^{(i)}_{t, 0} \leftarrow  \theta^{(i)}_{t-1}, i=1,\ldots, N$
   \State  \For{$m = 1$ to $M$}
    \State $\theta_{t,m}^{(i)} \leftarrow \theta_{t,m-1}^{(i)} + \epsilon \, \phi^\ast(\theta_{t,m-1}^{(i)})$ using  \eqref{eq:svgd_update}-\eqref{eq:svgd_empirical} \footnotemark
    \EndFor
    \State $\theta_{t}^{(i)} \leftarrow  \theta_{t,M}^{(i)}$
\EndFor
\State \Return $\{\hat{x}_{t|t}^{(i)}, P_{t|t}^{(i)}, \theta_t^{(i)}\}_{i=1}^N$
\end{algorithmic}
\label{alg:RBSGD}
\end{algorithm}
\footnotetext{Parameter update can be optionally done with an adaptive step size $\epsilon$ provided by the Adam algorithm.}

\subsection{Rao--Blackwellized Fisher Stein Gradient Descent Filter}
\label{sec:RBFSGD}

While the RBSGD filter mitigates particle degeneracy and 
enhances stability by eliminating the dependence on the proposals
 with deterministic SVGD updates, its \emph{first-order} optimization dynamics \eqref{eq:svgd_update} may be slow or poorly conditioned when the posterior exhibits strong curvature or parameter correlations. To address this, we introduce the \emph{Rao--Blackwellized Fisher Stein Gradient Descent (RBFSGD)} filter, which incorporates \emph{second-order} information via the \emph{Fisher Information Matrix} (FIM). Similar to natural-gradient and Fisher-based optimization methods \cite{amari1998natural, hwang2024fadam}, RBFSGD uses the FIM as a positive semi-definite curvature approximation of the log-likelihood, preconditioning gradient directions to achieve better-conditioned, directionally informed updates without the computational cost of full Hessian evaluation \cite{schraudolph2002fast, martens2010deep, vinyals2012krylov}.

 In FSGD, the SVGD particle update in~\eqref{eq:svgd_update} is modified by incorporating second-order information as follows \cite{hwang2024fadam},
 \begin{align}\label{eq:svgd_update_FIM}
     \theta^{(i)}_{t,m} = \theta^{(i)}_{t,m-1} + \epsilon \mathcal{F}(\theta^{(i)}_{t,m-1}) \hat{\phi}^{\ast}(\theta^{(i)}_{t,m-1})
 \end{align}
where $\mathcal{F}(\theta) \in \mathbb{R}^{n_{\theta} \times n_{\theta}}$ is the symmetric positive semi-definite FIM with respect to the parameters $\theta$, defined based on the marginal likelihood  as follows,
\begin{align}
\mathcal{F}(\theta) 
&= \mathbb{E}_{y_t \sim p(y_t| \theta, y_{1:t-1})} \Big[
\nabla_\theta \log p(y_t| \theta,y_{1:t-1}) \notag\\
&\quad\quad\quad \times
\big(\nabla_\theta \log p(y_t \mid \theta, y_{1:t-1})\big)^\top
\Big],
\label{eq:fim_def_meas}
\end{align}
In practice, the expectation in \eqref{eq:fim_def_meas} is generally intractable, thus it is 
 approximated empirically using the available  particles $\theta^{(i)}$ as, 
\begin{align}
\widehat{\mathcal{F}}_{\mathrm{lik}}
&= \frac{1}{N} \sum_{i=1}^N 
\nabla_{\theta^{(i)}} \log p(y_t| \theta^{(i)}, y_{1:t-1}; \hat{x}_{t|t-1}^{(i)}, S^{(i)}_t) \notag\\
&\quad\quad\quad \times
\big(\nabla_{\theta^{(i)}} \log p(y_t | \theta^{(i)}, y_{1:t-1}; \hat{x}_{t|t-1}^{(i)}, S^{(i)}_t)\big)^\top.
\label{eq:fim_empirical_lik}
\end{align}
where $\widehat{\mathcal{F}}_{\mathrm{lik}}$ captures the local curvature of the Gaussian log-likelihood  of the Kalman filter in \eqref{eq:liklihood} 
with respect to the parameters. Further, the empirical FIM in \eqref{eq:fim_empirical_lik} is employed to  construct the \emph{Fisher-informed} RBF kernel $k_{\mathcal{F}}(\theta, \theta')$ as follows,
\begin{equation}
\label{eq:fisher_rbf}
k_\mathcal{F}(\theta^{(i)}, \theta^{(j)}) = \exp\!\Big(\! -\frac{1}{h} (\theta^{(i)} - \theta^{(j)})^\top \widehat{\mathcal{F}}_{\mathrm{lik}} (\theta^{(i)} - \theta^{(j)}) \! \Big),
\end{equation}
which, in turn, is employed in the computation of the perturbation direction $\hat{\phi}^{\ast}(\theta)$ in the FSGD update~\eqref{eq:svgd_update_FIM}. With respect to the  RBF kernel $k(\theta, \theta')$ used in SVGD, the Fisher-informed kernel $k_\mathcal{F}(\theta^{(i)}, \theta^{(j)})$ in \eqref{eq:fisher_rbf} accounts for the anisotropic parameter sensitivities by adapting the similarity metric between particles according to local curvature.

Similar to $\mathcal{F}_{\mathrm{lik}}$ in \eqref{eq:fim_empirical_lik}, an empirical FIM can be defined using the SVGD perturbation direction  as follows,
\begin{equation}
\label{eq:fim_empirical_svgd}
\widehat{\mathcal{F}}_{\mathrm{svgd}} 
= \frac{1}{N} \sum_{i=1}^N \hat{\phi}^{\ast}(\theta^{(i)}) \hat{\phi}^{\ast}(\theta^{(i)})^\top,
\end{equation}
where $\hat{\phi}^{\ast}(\theta^{(i)})$ is the update direction expressed as in \eqref{eq:svgd_empirical}, replacing the RBF with the Fisher-informed RBF given in \eqref{eq:fisher_rbf}. 

We now present FSGD algorithm for parameter updates incorporating Fisher information matrices  \eqref{eq:fim_empirical_lik}-\eqref{eq:fim_empirical_svgd}. The parameter particles are updated using a Fisher-preconditioned variant of the Adam optimizer, known as Fisher--Adam \cite{hwang2024fadam}. The iterative update for particle $\theta^{(i)}$ at the $m$-th iteration at time $t$ is given by,
\begin{equation}
\label{eq:fisher_adam_step}
\theta_{t,m}^{(i)} = \theta_{t,m-1}^{(i)} + \epsilon \, (l_{t,m}^{(i)})^{-1} \hat{g}_{t,m}^{(i)},
\end{equation}
where \(\epsilon\) is the step size, and the auxiliary variables $l_{t,m}, \hat{g}_{t,m}$ evolve according to the following update rules:
\begin{subequations}
\label{eq:fisher_adam_aux}
\begin{align}
g_{t,m}^{(i)} &= \beta_1 g_{t,m-1}^{(i)} + (1 - \beta_1)\, \hat{\phi}^{\ast}(\theta^{(i)}_{t,m-1}), \\
v_{t,m}^{(i)} &= \beta_2 v_{t,m-1}^{(i)} + (1 - \beta_2)\, \widehat{\mathcal{F}}_{\mathrm{svgd}}, \\
\hat{g}_{t,m}^{(i)} &= \frac{g_{t,m}^{(i)}}{1-{\beta_1}^m}, \\
\hat{v}_{t,m}^{(i)} &= \frac{v_m^{(i)}}{1-{\beta_2}^m}, \\
l_{t,m}^{(i)} l_{t,m}^{(i)\top} &= \hat{v}_{t,m}^{(i)}.
\end{align}
\end{subequations}
Here, \(g_{t,m}^{(i)}\) and \(v_{t,m}^{(i)}\) denote the first and second moment estimates of the parameter $\theta^{(i)}_{t,m}$, respectively, and
\(\hat{g}_m^{(i)}\) and \(\hat{v}_m^{(i)}\) are their bias-corrected counterparts. The term \(l_{t,m}^{(i)}\) 
is the Cholesky factor of \(\hat{v}_{t,m}^{(i)}\). 
The parameters \(\beta_1\) and \(\beta_2\) are exponential decay rates (typically set to \(\beta_1 = 0.9\) and \(\beta_2 = 0.999\)),  
\(\phi^{\ast}(\theta^{(i)})\) is the FSGD update direction as in~\eqref{eq:svgd_empirical} with fisher-informed kernel  in \eqref{eq:fisher_rbf}, 
and \(\widehat{\mathcal{F}}_{\mathrm{svgd}}\) is the empirical FIM computed  as in~\eqref{eq:fim_empirical_svgd}. This formulation allows the parameter updates to incorporate local curvature information efficiently, improving convergence and stability compared to standard first-order SVGD updates \eqref{eq:svgd_update}. 

As in RBSGD, we integrate the FSGD particle update \eqref{eq:fisher_adam_step}  in the Rao-Blackwellized framework. The resulting  RBFSGD procedure at time $t$ is summarized in Algorithm~\ref{alg:RBFSGD}. At each time step $t$, the conditional state mean $\hat{x}_{t|t}$ and covariance $P_{t|t}$ are first updated via the Kalman filter, followed by calculation of the FIMs \eqref{eq:fim_empirical_lik}--\eqref{eq:fim_empirical_svgd}. Finally, the  parameters are updated by taking  $M$ Fisher--Adam steps as in \eqref{eq:fisher_adam_step}.

\begin{algorithm}[!h]
\caption{Rao--Blackwellized Fisher Stein Gradient Descent Filter}
\begin{algorithmic}
\Require Measurement $y_t$, EKF mean state and covariance$\{\hat{x}_{t-1|t-1}^{(i)}, P_{t-1|t-1}^{(i)}\}_{i=1}^N$, particles $ \{\theta_{t-1}^{(i)}\}_{i=1}^{N}$, Fisher informed kernel $k_{\mathcal{F}}(\cdot,\cdot)$, step size $\epsilon$
\For{$i = 1$ to $N$}
    \State \textbf{Kalman Predict} $\hat{x}_{t|t-1}^{(i)}$, $P_{t|t-1}^{(i)}$ via \eqref{eq:ekf_prediction}
    \State \textbf{Kalman Update} $\hat{x}_{t|t}^{(i)}$, $P_{t|t}^{(i)}$ via \eqref{eq:ekf_update}
\EndFor
\State \textbf{Compute} $\widehat{\mathcal{F}}_{\mathrm{lik}}$ and $\widehat{\mathcal{F}}_{\mathrm{svgd}}$ using \eqref{eq:fim_empirical_lik} and \eqref{eq:fim_empirical_svgd}
\State \textbf{Particle Update:} 
  \State  $\theta^{(i)}_{t,0} \leftarrow  \theta^{(i)}_{t-1}, i=1,\ldots, N$
\For{$i = 1$ to $N$}
    \For{$m = 1$ to $M$}
        \State Update $\theta_{t, m}^{(i)}$ using Fisher--Adam step \eqref{eq:fisher_adam_step}
    \EndFor
    \State $\theta_{t}^{(i)} \leftarrow \theta_{t,M}^{(i)}$
\EndFor
\State \Return $\{\hat{x}_{t|t}^{(i)}, P_{t|t}^{(i)}, \theta_t^{(i)}\}_{i=1}^N$
\end{algorithmic}
\label{alg:RBFSGD}
\end{algorithm}

\section{Theoretical Bounds on Posterior Approximation Error}

Once the joint posterior is approximated with RBSGD or RBFSGD algorithms, we can compute the marginal state posterior. In this section, we derive bounds on the approximation error in the marginal state posterior due to the approximate parameter posterior. 

As in the RBPF~\eqref{eq:approx_posterior},  we can obtain an approximation of the joint posterior with RBSGD and RBFSGD filters as,
\begin{align}\label{eq:approx_posterir_RBSGD}
      \tilde{p}(x_t, \theta_t|y_{1:t}) = \frac{1}{N}\sum_{i=1}^{N} \delta_{\theta^{(i)}_t}(\theta_t)\mathcal{N}(x_t;\hat{x}^{(i)}_{t|t}, P^{(i)}_{t|t}),
\end{align}
where $\{\theta^{(i)}_t \}_{i=1}^{N}$ are the particles (with equal weights $\frac{1}{N}$) deterministically transported towards the target posterior $p(\theta_t|y_{1:t})$.

From~\eqref{eq:approx_posterir_RBSGD} we can  compute the \emph{marginal} state posterior approximation by integrating out the particles,
\begin{align}
\label{eq:marginal_mixture_rbfsvgd}
\tilde p(x_t \mid y_{1:t})
= \sum_{i=1}^{N} \tilde{p}(x_t, \theta^{(i)}_t|y_{1:t}) 
=  \frac{1}{N}\sum_{i=1}^N 
\mathcal{N}\!\big(x_t; \hat{x}_{t|t}^{(i)},\, P_{t|t}^{(i)}\big),
\end{align}
as $\delta_{\theta^{(i)}_t}(\theta^{(i)})=1$ and $0$ otherwise. 
\begin{remark}\label{rem:MoG}
  The representation in \eqref{eq:marginal_mixture_rbfsvgd} highlights a key
advantage of the RBSGD/RBFSGD framework: the marginal approximated state posterior is a
\emph{Gaussian mixture}, in contrast to \emph{augmented} EKF approaches, which impose a  \emph{Gaussian}
approximation on the joint state--parameter vector and thus force the marginal state
posterior to remain Gaussian.  The mixture form in
\eqref{eq:marginal_mixture_rbfsvgd} naturally captures multi-modality, skewness,
and other nonlinear effects induced by parameter uncertainty, resulting in a
more expressive and often more accurate posterior representation.  
\end{remark}

\begin{remark}[MAP estimates]
   The joint posterior can also be used to compute the  \emph{maximum a-posterioiri} (MAP) estimates $x^{\mathrm{MAP}}_t, \theta^{\mathrm{MAP}}_t$ as,
\begin{align*}
   x_{\mathrm{MAP}}, \theta_{\mathrm{MAP}} &= \arg \max_{x_t,\theta_t}  \tilde{p}(x_t, \theta_t|y_{1:t}) \\
   &= \arg \max_{x_t,\{\theta^{(i)}_t\}_{i=1}^{N}} \mathcal{N}(x_t; \hat{x}^{(i)}_{t|t}, P^{(i)}_{t|t}).
\end{align*} 
\end{remark}

To assess the reliability of the 
approximated posteriors with the particle methods, in this  section, we establish a bound on how the approximation error in the parameter posterior affects the resulting state estimates. The following Proposition formalizes the result by relating the KL divergence between the true and approximate parameter posteriors to the total variation distance between the corresponding marginal state posteriors.

\begin{proposition}
\label{prop:pinsker_bound}
Let $\pi_t(\theta)=p(\theta_t| y_{1:t})$ denote the true parameter posterior and let $q_t(\theta)$ be its SGD/FSGD based particle approximation at time $t$. 
Let $p(x_t| y_{1:t})$ and $\tilde p(x_t| y_{1:t})$ denote the true and approximate marginal state posteriors,  obtained by marginalizing the joint posterior $p(x_t,\theta_t| y_{1:t})$, over $\pi_t$ and $q_t$, respectively. Then,
\begin{equation}
\label{eq:pinsker_tonelli_bound}
\|p(x_t| y_{1:t}) - \tilde{p}(x_t| y_{1:t})\|_{1}
\le
\sqrt{2\mathrm{KL}\big(q_t(\theta) \,\|\, \pi_t(\theta)\big)},
\end{equation}
where $\|\cdot\|_1$ denotes the $L^1$ norm on the measurable space.
\end{proposition}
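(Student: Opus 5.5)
The argument has three steps: write both marginals as mixtures of the same conditional kernel, push the $L^1$ distance through that shared kernel, and finish with Pinsker's inequality.

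\textbf{Mixture representation.} The true and approximate marginal state posteriors are both obtained by integrating the same conditional state density $p(x_t\mid\theta_t,y_{1:t})$ against two different parameter measures:
\begin{align*}
p(x_t\mid y_{1:t})&=\int p(x_t\mid\theta,y_{1:t})\,\pi_t(\theta)\,d\theta,\\
\tilde p(x_t\mid y_{1:t})&=\int p(x_t\mid\theta,y_{1:t})\,q_t(\theta)\,d\theta.
\end{align*}
This is exactly the factorization \eqref{eq:factor} used in the RB construction. For the particle version \eqref{eq:marginal_mixture_rbfsvgd}, $q_t$ is the empirical measure and the conditional is the Gaussian \eqref{eq:Kalman_post}. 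I will treat the conditional kernel as common to both marginals, as the statement implicitly does. Integrals against $q_t$ are read as integrals with respect to a measure $q_t(d\theta)$, so that point masses are allowed.

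\textbf{Data-processing step.} Subtracting the two representations and applying the triangle inequality under the integral gives
\[
\|p-\tilde p\|_1=\int\Big|\int p(x\mid\theta)\,(\pi_t-q_t)(d\theta)\Big|\,dx\le\int\!\!\int p(x\mid\theta)\,|\pi_t-q_t|(d\theta)\,dx .
\]
The integrand is nonnegative, so Tonelli's theorem lets me swap the order of integration. Since $\int p(x\mid\theta)\,dx=1$ for every $\theta$, the right-hand side equals the total variation mass $|\pi_t-q_t|(\Theta)=\|\pi_t-q_t\|_1$. This is the data-processing inequality for total variation under a Markov kernel.

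\textbf{Pinsker step.} Pinsker's inequality in $L^1$ form is $\|q_t-\pi_t\|_1\le\sqrt{2\,\mathrm{KL}(q_t\|\pi_t)}$. Chaining it with the previous bound yields \eqref{eq:pinsker_tonelli_bound}. The constant is correct for the $L^1$ normalization: with total variation defined as half the $L^1$ norm, Pinsker gives $\mathrm{TV}\le\sqrt{\mathrm{KL}/2}$, which is the same statement.

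\textbf{Main obstacle.} The step I expect to require care is measure-theoretic rather than the inequalities themselves. In the finite-$N$ case $q_t$ is a sum of Dirac masses while $\pi_t$ has a density, so $q_t$ is not absolutely continuous with respect to $\pi_t$. Then $\mathrm{KL}(q_t\|\pi_t)=+\infty$ and the bound holds only trivially. I would state the proof for $q_t\ll\pi_t$, which covers the population-limit densities of Proposition~\ref{thm:svgd_one_step_formal}, or alternatively for a kernel-smoothed version of the particle measure. I would also remark that the bound holds vacuously otherwise. The remaining technical point, justifying the swap of integrals, needs only nonnegativity and measurability of the kernel $(\theta,x)\mapsto p(x\mid\theta,y_{1:t})$.
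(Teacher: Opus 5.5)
Your proposal is correct and follows the paper's proof essentially step for step: the same mixture representation with a common conditional kernel, then the triangle inequality plus Tonelli to get $\|p(x_t|y_{1:t})-\tilde p(x_t|y_{1:t})\|_1\le\|\pi_t-q_t\|_1$, then Pinsker's inequality. Your added caveat is a valid point the paper leaves implicit: for a finite-$N$ empirical measure $\mathrm{KL}(q_t\|\pi_t)=+\infty$, so the bound is only informative when $q_t\ll\pi_t$, e.g.\ in the population limit.
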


\begin{proof}
By definition we have,
\begin{align*}
p(x_t| y_{1:t}) &= \int p(x_t|\theta, y_{1:t})\,\pi_t(\theta)\,d\theta, \\
\tilde{p}(x_t| y_{1:t}) &= \int p(x_t | \theta, y_{1:t})\,q_t(\theta)\,d\theta,
\end{align*}
and the $L^1$ distance between the marginals is,
\begin{equation*}
\begin{aligned}
\|p(x_t \mid y_{1:t}) &- \tilde{p}(x_t \mid y_{1:t})\|_{1}
= \\
&\int \Big| \int p(x_t \mid \theta, y_{1:t}) \,
[\pi_t(\theta) - q_t(\theta)] \, d\theta 
\Big|dx_t.
\label{eq:posterior_diff}
\end{aligned}
\end{equation*}
Applying the triangle inequality in the  integral above yields,
\begin{equation}
\begin{aligned}
\label{eq:inner_bound}
\|p(x_t \mid y_{1:t}) &- \tilde{p}(x_t \mid y_{1:t})\|_{1}
\le \\
&\int \int \big| p(x_t \mid \theta, y_{1:t}) [\pi_t(\theta)-q_t(\theta)] \big| \, d\theta \, dx_t.
\end{aligned}
\end{equation}
The integrand on the right-hand side in \eqref{eq:inner_bound} is nonnegative, thus we can exchange the order of integration:
\begin{align}\label{eq:ton}
\int \int& \big| p(x_t | \theta, y_{1:t}) [\pi_t(\theta)-q_t(\theta)] \big| \, d\theta \, dx_t \nonumber \\
&= \int \Big( \int p(x_t \mid \theta, y_{1:t}) \, dx_t \Big) \, |\pi_t(\theta)-q_t(\theta)| \, d\theta \nonumber\\
&= \int |\pi_t(\theta)-q_t(\theta)| \, d\theta,
\end{align}
as $p(x_t| \theta,y_{1:t})$ is a probability density in $x_t$ for a fixed $\theta$, it integrates to one. Combining \eqref{eq:ton} with \eqref{eq:inner_bound} yields,
\begin{equation}
\label{eq:tv_param_bound}
\|p(x_t| y_{1:t}) - \tilde{p}(x_t| y_{1:t})\|_{1}
\le \|\pi_t(\theta) - q_t(\theta)\|_{1}.
\end{equation}

Finally, Pinsker's inequality \cite{pinsker1964} relates the $L^1$ norm and the KL divergence as follows,
\[
\|\pi_t(\theta) - q_t(\theta)\|_1 \le \sqrt{2\,\mathrm{KL}(q_t(\theta)\|\pi_t(\theta))}.
\]
Combining the above inequality with \eqref{eq:tv_param_bound}, we obtain,
$\|p(x_t| y_{1:t}) - \tilde{p}(x_t| y_{1:t})\|_{1}
\le \sqrt{2\mathrm{KL}\big(q_t(\theta) \,\|\, \pi_t(\theta)\big)}.$
\end{proof}

\noindent
Proposition~\ref{prop:pinsker_bound} provides a theoretical guarantee on the fidelity of the proposed SVGD-based parameter inference: if the divergence between the approximate and true parameter posteriors is small, then the corresponding marginal state estimate remains close in total variation.

Combining the error bound from Proposition~\ref{prop:pinsker_bound} with  Proposition~\ref{thm:svgd_one_step_formal} allows us to
establish that sufficiently large number of SVGD iterations $M$, with sufficiently small step-sizes $\epsilon$, 
monotonically tighten the upper bound on the total variation distance between
the true and approximate marginal state posteriors.  
The following proposition formalizes this result.

\begin{proposition}\label{cor:state_contraction}
    Let $\pi_t(\theta) = p(\theta_t|y_{1:t})$ denote the true parameter posterior and let
$\{q_{t,m}\}_{m\ge0}$ be the sequence of proposal with SVGD iterates produced with a stepsize
$0<\epsilon\le\epsilon_0$ as in
Proposition~\ref{thm:svgd_one_step_formal} at time $t$.  Let
$p(x_t| y_{1:t})$ be the true marginal state posterior and let
$\tilde p_{m}^{\,}(x_t| y_{1:t})$ be the approximated marginal state posterior obtained by
integrating w.r.t.\ $q_{t,m}$.  Then at the population limit ($N \to \infty$), for all $m\ge0$,
\begin{equation}
\begin{aligned}
\label{eq:state_error_contraction}
\big\|
p(x_t| y_{1:t})-\tilde p_{m}^{\,}(x_t|y_{1:t})
\big\|_{1} &\le \sqrt{2\mathrm{KL}(q_{t, m} \,\|\, \pi_t)} \nonumber \\
\quad &< \sqrt{2\mathrm{KL}(q_{t,m-1} \,\|\, \pi_t)}.
\end{aligned}
\end{equation}
Thus, each SVGD update with $0<\epsilon\le\epsilon_0$ strictly reduces the
upper bound on the $L^1$ error of the approximate state marginal unless
$q_{t,m}=\pi_t$.
\end{proposition}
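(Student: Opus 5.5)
The argument is a direct composition of Proposition~\ref{prop:pinsker_bound} and Proposition~\ref{thm:svgd_one_step_formal}, applied at each fixed iteration index $m$.

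\emph{First inequality.} Fix $m\ge 0$. Define $\tilde p_m(x_t| y_{1:t}) = \int p(x_t| \theta, y_{1:t})\, q_{t,m}(\theta)\,d\theta$. This is exactly the marginal considered in Proposition~\ref{prop:pinsker_bound}, with $q_t$ replaced by $q_{t,m}$. That proof used only two facts: $p(x_t|\theta,y_{1:t})$ integrates to one in $x_t$, and $q_t$ is a probability density. Neither depends on how $q_t$ was produced. Repeating the same steps therefore gives
\[
\|p(x_t| y_{1:t}) - \tilde p_m(x_t| y_{1:t})\|_1 \le \|\pi_t - q_{t,m}\|_1 \le \sqrt{2\,\mathrm{KL}(q_{t,m}\|\pi_t)}.
\]
The three steps are the triangle inequality, Tonelli, and Pinsker. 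At the population limit $q_{t,m}$ is an honest density (the pushforward of $q_{t,m-1}$ under $T_{\epsilon\phi^\ast}$), so the KL is well defined.

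\emph{Second inequality.} Take $m\ge1$ with $q_{t,m-1}\neq\pi_t$. Proposition~\ref{thm:svgd_one_step_formal} gives $\mathrm{KL}(q_{t,m}\|\pi_t) < \mathrm{KL}(q_{t,m-1}\|\pi_t)$ for the admissible step sizes. Since $s\mapsto\sqrt{2s}$ is strictly increasing on $[0,\infty)$, the strict inequality transfers to the square roots, and chaining with the first inequality gives the displayed claim.

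\emph{Edge cases.} If $q_{t,m-1}=\pi_t$, then $\phi^\ast_{\pi_t,\pi_t}\equiv 0$ by Stein's identity. Hence $q_{t,m}=\pi_t$, both sides vanish, and only the non-strict version holds. This is the exception stated at the end of the proposition. For $m=0$ only the first inequality is meaningful, since there is no $q_{t,-1}$.

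\emph{Main obstacle: the step-size threshold.} Proposition~\ref{thm:svgd_one_step_formal} furnishes an $\epsilon_0$ that may depend on $q_{t,m-1}$. A single $\epsilon$ valid for all $m$ therefore needs either a uniform lower bound on these thresholds or an adaptive step $\epsilon_m<\epsilon_0(q_{t,m-1})$.

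The proposition also allows $\epsilon=\epsilon_0$, whereas Proposition~\ref{thm:svgd_one_step_formal} is stated for the open interval $0<\epsilon<\epsilon_0$. I would handle this by relabeling: shrink $\epsilon_0$ slightly so that the given $\epsilon$ lies strictly inside the admissible interval. Alternatively, I would interpret $\epsilon_0$ in the proposition as any value strictly below the supremum of admissible step sizes.

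To obtain uniformity, I would inspect the Taylor expansion behind Proposition~\ref{thm:svgd_one_step_formal}. It has the form $\mathrm{KL}(q_{t,m})\le \mathrm{KL}(q_{t,m-1}) - \epsilon\,\mathbb{S}(q_{t,m-1},\pi_t) + C\epsilon^2$, where $\mathbb{S}$ is the squared Stein discrepancy. Under the bounded-Hessian and bounded-kernel assumptions, $C$ is uniform in $m$. The threshold is then $\epsilon_0=\mathbb{S}/C$, which is positive whenever $q_{t,m-1}\ne\pi_t$, so this is where uniformity must be checked.
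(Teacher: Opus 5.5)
Your argument is correct and follows the paper's proof, which likewise applies Proposition~\ref{thm:svgd_one_step_formal} to Proposition~\ref{prop:pinsker_bound} at each iteration $m$ and uses that $s\mapsto\sqrt{2s}$ is strictly increasing. Your caveats are real points that the paper's one-line proof passes over: $\epsilon_0 = 2B_{m-1}/C$ depends on $m$ and may shrink as $q_{t,m-1}\to\pi_t$, taking $\epsilon=\epsilon_0$ gives only a non-strict decrease, and the second inequality is vacuous at $m=0$. Reading the step-size hypothesis as holding iteration by iteration, as you propose, is the right way to make the statement precise.
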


\begin{proof}
The proof straightforwardly follows from applying Proposition~\ref{thm:svgd_one_step_formal} to Proposition~\ref{prop:pinsker_bound}.   
\end{proof}

\noindent
Proposition~\ref{cor:state_contraction} shows that the refinement of the
parameter posterior via SVGD directly improves the theoretical upper bound on
the resulting state approximation error.  Hence, under sufficiently small
stepsizes, repeated SVGD updates jointly tighten the parameter and state
posteriors in a monotone manner.

\begin{remark}
\label{rem:population_limit_svgd}
Although the practical SVGD implementation is based on a finite set of $N$
particles, the theoretical analysis in Propositions~\ref{thm:svgd_one_step_formal}--\ref{prop:pinsker_bound}, and
Proposition~\ref{cor:state_contraction}, is carried out in  asymptotically, assuming $N \rightarrow \infty$.  In this regime, the particle system is interpreted as
an empirical approximation to an underlying smooth density \(q_t\), which
evolves under the deterministic transport map
\(\theta \mapsto \theta + \epsilon \phi(\theta)\).  This viewpoint is
standard in SVGD theory (see, e.g.,~\cite{liu2016stein,korba2021kernel}). The particle-based algorithm therefore
serves as a finite-dimensional discretization of the mean-field SVGD dynamics analysed here.
\end{remark}

\section{Case Studies}
\label{sec:case_studies}
In this section, we evaluate the performance of the proposed filtering algorithms on two  case studies. The first one considers a nonlinear batch bioreactor process, highlighting the proposed filtering algorithm's capability to handle time-varying nonlinearities and process uncertainties. In the second case study, we consider the problem of online training of a neural network  to  learn  unknown  dynamics of a nonlinear system. 

\subsection{Batch Bioreactor System}
In this case study, we consider a nonlinear batch bioreactor model, commonly used to describe microbial fermentation processes. The reactor dynamics is governed by three state variables: biomass concentration $X_t$, substrate concentration $S_t$, and product concentration $P_t$. The continuous-time model of the  system is  obtained via mass balance equations (see~\cite{marcos2004adaptive}) and  is described by:
\begin{subequations}
\label{eq:bioreactor_model}
\begin{align}
\frac{dX_t}{dt} &= \mu(S_t, \eta_t) X_t, \\
\frac{dS_t}{dt} &= -\frac{1}{Y_{xs}} \mu(S_t, \eta_t) X_t, \\
\frac{dP_t}{dt} &= Y_{px} \mu(S_t, \eta_t) X_t,
\end{align}
\end{subequations}
where $\mu(S_t, \eta_t)$ denotes the specific growth rate, $Y_{xs}$ is the yield coefficient of biomass on substrate, $Y_{px}$ is the yield coefficient of product on biomass, and $\eta_t$ is the mixing efficiency parameter.

The specific growth rate $\mu(S_t, \eta_t)$ follows the Haldane kinetics model, which accounts for substrate inhibition and dissolved oxygen effects:
\begin{align}
\mu(S_t, \eta_t) = \mu_{\max} \frac{S_t}{K_s + S_t + \frac{S_t^2}{K_i}} \, \eta_t,
\end{align}
where $\mu_{\max}$ is the maximum growth rate, $K_s$ and $K_i$ are the half-saturation and inhibition constants, respectively.

The mixing efficiency parameter $\eta_t$ is modeled as a stochastic, time-varying variable to capture gradual changes in reactor performance, such as varying oxygen transfer or mixing conditions. Its evolution over time is described by,
\begin{align}
\eta_t &\!\sim\! \mathcal{N}(\mu_{\eta,t}, \sigma_\eta^2), \
\mu_{\eta,t} = (1 - \sigma(\alpha t - \beta)) \eta_0 + \sigma(\alpha t - \beta) \eta_f,
\end{align}
where $\sigma(\cdot)$ denotes the sigmoid function, $\eta_0$ and $\eta_f$ define the initial and final expected efficiency values, and $(\alpha, \beta)$ control the transition rate and timing of the change, respectively.
The model parameters and simulation settings used in this study are summarized in Table~\ref{tab:bio_params}.

\begin{table}[!bt]
\centering
\caption{Model and simulation parameters for the batch bioreactor system.}
\label{tab:bio_params}
\begin{tabular}{lcc}
\toprule
\textbf{Parameter} & \textbf{Symbol} & \textbf{Value} \\ 
\midrule
Maximum specific growth rate & $\mu_{\max}$ & $0.4$~h$^{-1}$ \\
Half-saturation constant & $K_s$ & $0.1$ \\
Inhibition constant & $K_i$ & $10.0$ \\
Yield coefficient (biomass/substrate) & $Y_{xs}$ & $0.5$ \\
Yield coefficient (product/biomass) & $Y_{px}$ & $0.6$ \\
Initial mixing efficiency & $\eta_0$ & $1.0$ \\
Final mixing efficiency & $\eta_f$ & $0.6$ \\
Sigmoid slope & $\alpha$ & $0.05$ \\
Sigmoid shift & $\beta$ & $5.0$ \\
Process noise covariance & $Q$ & $1\times10^{-6}  I_3$ \\
Measurement noise covariance & $R$ & $1\times10^{-6}$ \\
Sampling period & $t_s$ & $0.2$~h \\
\bottomrule
\end{tabular}
\end{table}

The continuous-time model \eqref{eq:bioreactor_model} is discretized using a fourth-order Runge--Kutta (RK4) method, and process and measurement noise terms are incorporated to yield the discrete-time state--space representation consistent with \eqref{eq:prob_eqn}.

The objective of this case study is to jointly estimate the system states $x_t= [X_t \  S_t \  P_t]^{\top} \in \mathbb{R}^{3}$ and the time-varying mixing efficiency parameter $\theta_t= \eta_t \in \mathbb{R}$ from noisy measurements of the product concentration $y_t = P_t \in \mathbb{R}$. The performance of the proposed RBSGD and RBFSGD filters is benchmarked against an Extended Kalman Filter (EKF) and a Rao--Blackwellized Particle Filter (RBPF) (Algorithm~\ref{alg:rbpf_simple}). 
The RBSGD and RBFSGD filters employs $N=5$ particles, each performing a single Fisher--Adam update ($M=1$) per time step with a step size $\epsilon = 0.001$. For fair comparison, the RBPF is also implemented with $5$ particles, and the parameter covariance $\Sigma_\theta$ is tuned empirically via trial and error. All of the filters considered are initialized with the same parameter estimates, state means and covariances.


The filters' performances are assessed using the \emph{Continuous Ranked Probability Score} (CRPS), which provides a scoring rule for evaluating the accuracy of probabilistic estimates. For a scalar measurement $y$ and a predicted cumulative distribution function (CDF) $F$, the CRPS is defined as
\begin{equation}
\label{eq:crps_def}
\mathrm{CRPS}(F, y) = \int_{-\infty}^{\infty} \big(F(z) - \mathbb{I}\{z \geq y\}\big)^2 \, dz,
\end{equation}
where $\mathbb{I}\{\cdot\}$ denotes the indicator function. Lower CRPS values indicate more accurate   predictive distribution. The CRPS for  Gaussian (EKF) and Gaussian–mixture (proposed filters) posterior predictive distributions can be  computed using  closed-form expressions.

Figure~\ref{fig:bio_state} compares the performance of the proposed RBSGD, RBFSGD filters with the EKF and RBPF in estimating the reactor states $X_t$ and $S_t$. The estimated   product concentration  $P_t$ is not reported as it is considered as a measured state.  The proposed filters demonstrate superior tracking performance in terms of CRPS, with tighter uncertainty bounds.  Figure~\ref{fig:bio_param} illustrates the estimated  mixing efficiency $\eta_t$, where the proposed methods accurately captures the time-varying  transition while maintaining lower estimation variance compared to the RBPF.
\begin{figure*}[!t]
  \centering
  \includegraphics[width=\textwidth]{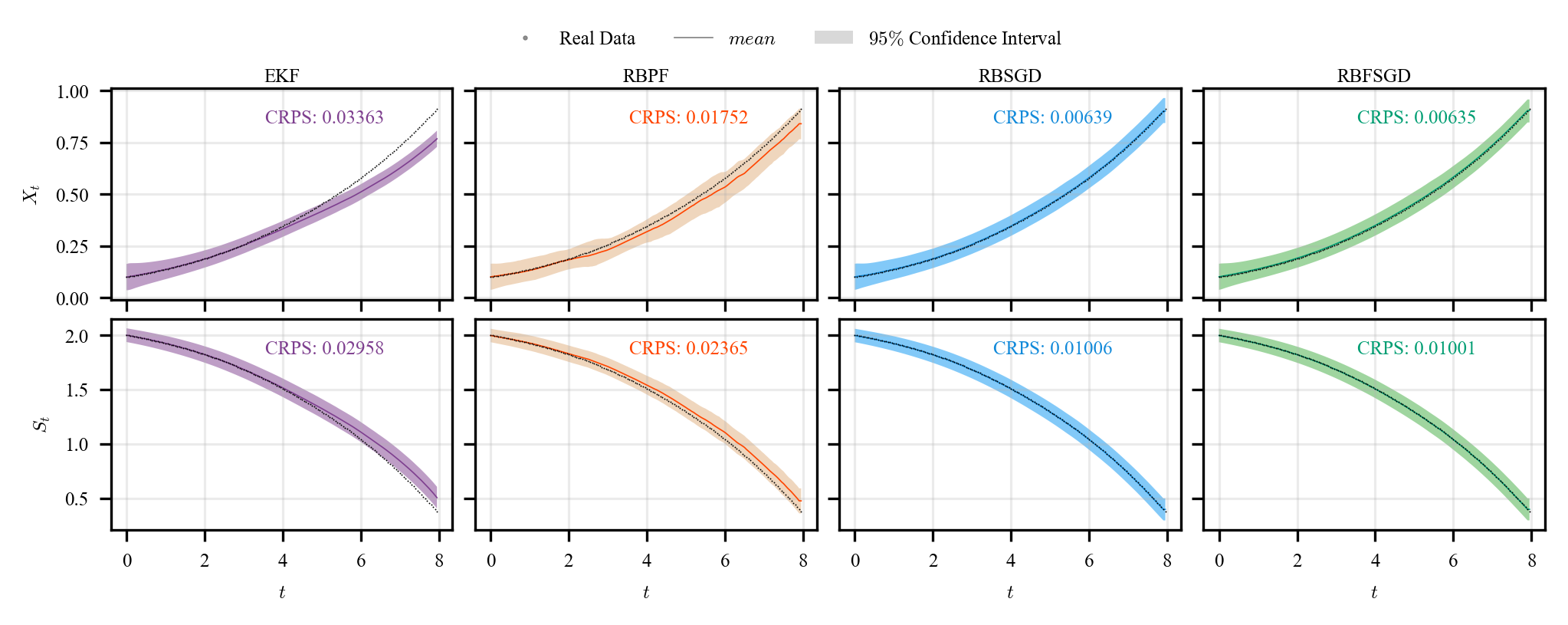}
  \caption{State estimation performance: true states (dotted black), filter means (solid lines) and 95\% confidence intervals (shaded areas). State $X_t$ (top panels) and $S_t$ (bottom panels).}
  \label{fig:bio_state}
\end{figure*}

\begin{figure}[!t]
  \centering
  \includegraphics[scale=1]{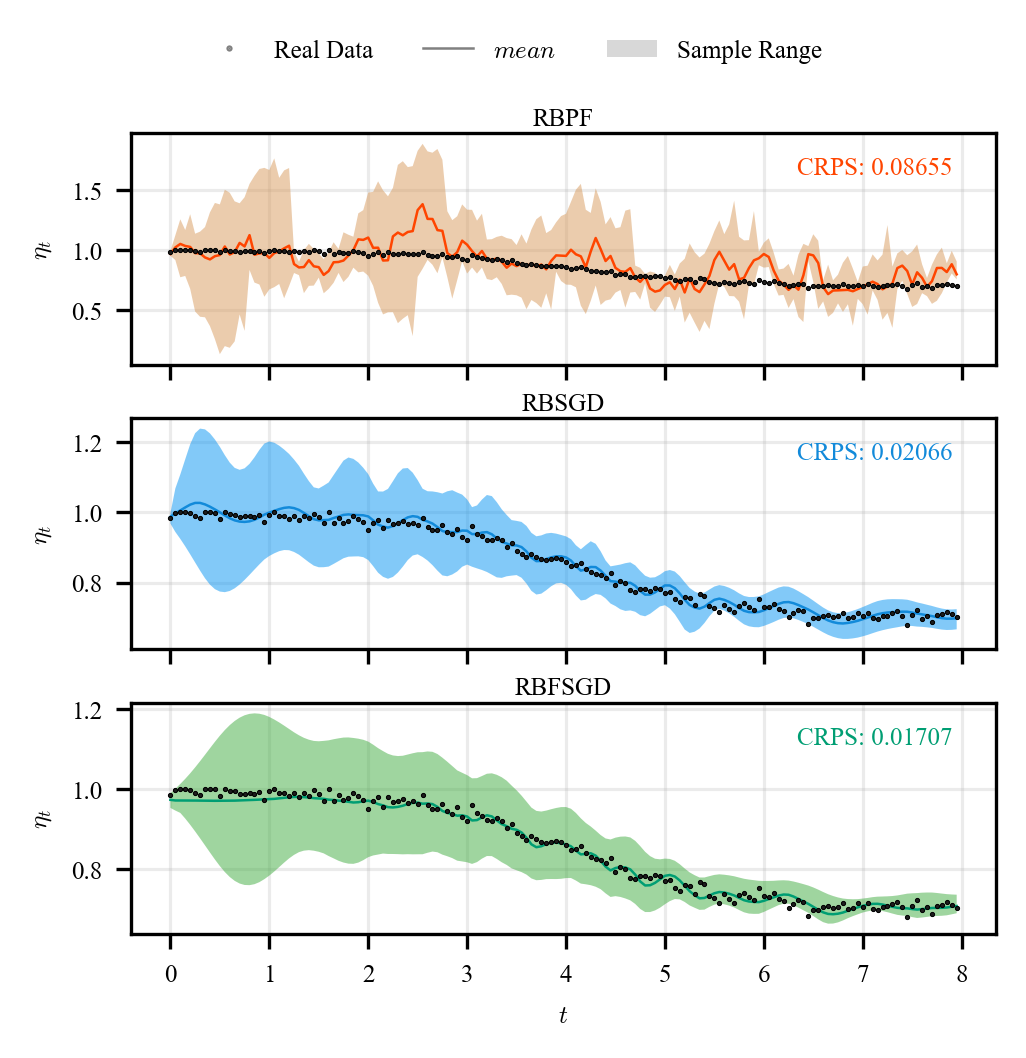}
  \caption{Mixing efficiency estimation: true $\eta_t$ (dotted black), filter means (solid lines)  and uncertainty (shaded areas).}
  \label{fig:bio_param}
\end{figure}

To evaluate the impact of incorporating Fisher information into the proposed algorithm, a Monte Carlo study was performed using $50$ independent realizations of the stochastic parameter $\eta_t$. Figure~\ref{fig:BR_MC} illustrates the comparative performance of the RBFSGD, RBSGD, and RBPF filters in terms of CRPS. The results show that both SVGD--based methods outperform the RBPF baseline, achieving lower CRPS values. Furthermore, the Fisher-informed variant of the proposed filter (RBFSGD) outperforms its non-informed counterpart (RBSGD), demonstrating the benefits of incorporating Fisher information for improved gradient scaling and uncertainty calibration. For fair comparison, identical hyperparameter settings were used for the RBFSGD and RBSGD filters.

\begin{figure}[!t]
  \centering
  \includegraphics[scale=1]{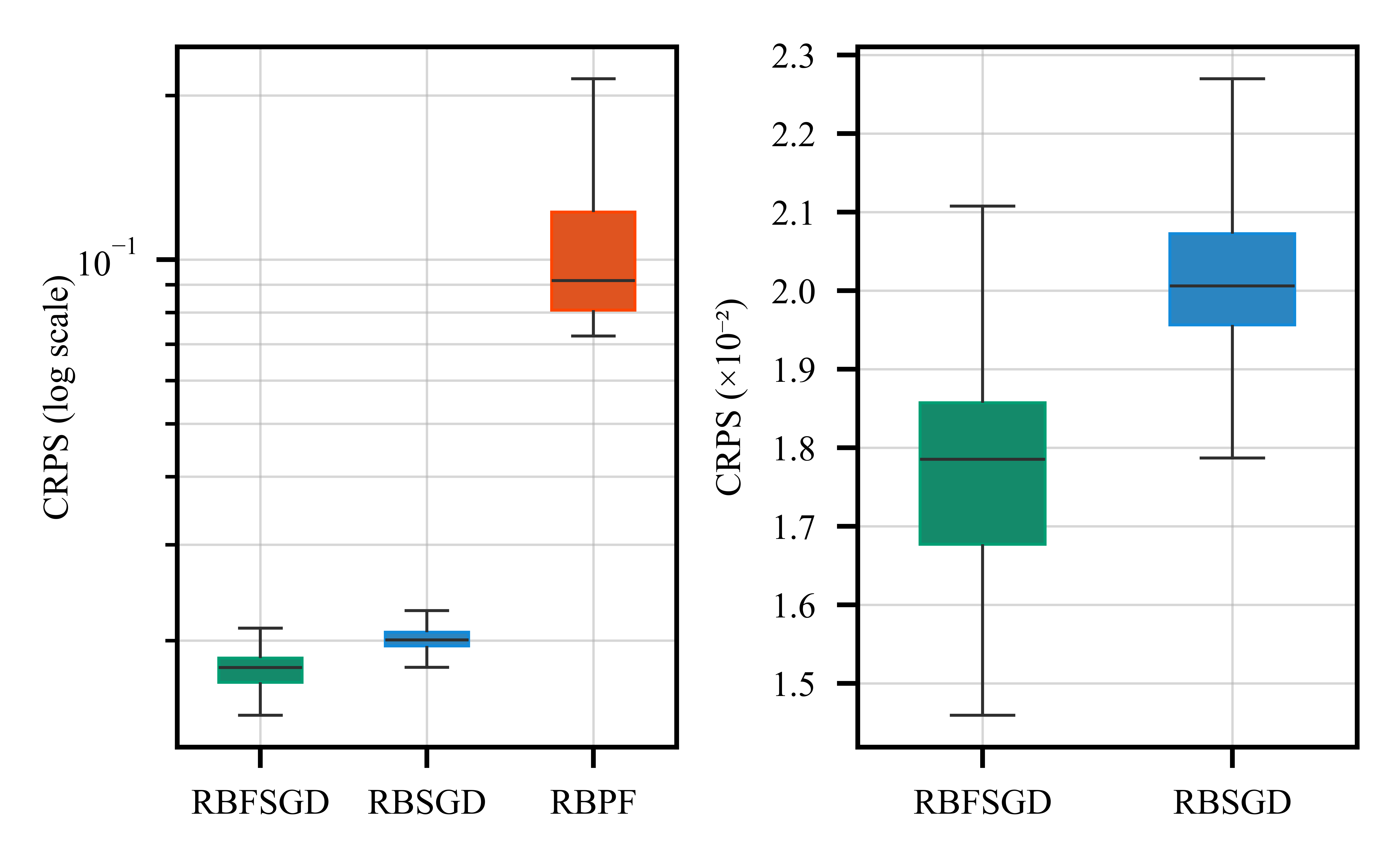}
  \caption{Filter performance on the Monte Carlo study: The left panel shows the log-scaled CRPS comparison for the RBFSGD filter, RBSGD filter, and RBPF across 50 independent realizations of $\eta_t$. The right panel shows the zoomed-in boxplot from the left panel for RBFSGD and RBSGD.}
  \label{fig:BR_MC}
\end{figure}

\subsection{Online training of neural networks}

In the second case study, we consider a three-state nonlinear dynamical system governed by
\begin{subequations}
\label{eqn:NNtrain}
\begin{align} 
\frac{dx_{t,1}}{dt} &= x_{t,2}, \\
\frac{dx_{t,2}}{dt} &= x_{t,3}, \\
\frac{dx_{t,3}}{dt} &= -2x_{t,1} - 3x_{t,2} - 4x_{t,3} + u_t + f_\mathrm{nl}(x_{t,1}, x_{t,2}, x_{t,3}),
\end{align}

where the nonlinear term $f_\mathrm{nl}$ is defined as
\begin{equation}
f_\mathrm{nl}(x_{t,1}, x_{t,2}, x_{t,3}) = x_{t,1} e^{x_{t,3}} + 0.2 \sin(x_{t,2} x_{t,3}) + x_{t,3} + x_{t,2}.
\end{equation}
\end{subequations}

Only the state $x_{t,1}$ is assumed to be measurable. 

The equations \eqref{eqn:NNtrain} are discretized using RK4 scheme with sampling time $t_s = 0.01$, and additive process noise ($Q_t = 1\times10^{-4}  I_3$) and measurement noise ($R_t = 0.1$) are included to account for model uncertainties and sensor noise.

The objective of this benchmark is to jointly estimate the system states $(x_{t,1}, x_{t,2}, x_{t,3})$, the measurement noise covariance $R_t$, and the parameters of a neural network that approximates the nonlinear mapping $f_\mathrm{nl}$ from the noisy observations of $x_{t,1}$. The neural network comprises two hidden layers with four neurons each, using hyperbolic tangent (\textit{tanh}) activations, resulting in a total of $42$ trainable parameters $\theta_t$.

Online learning of the neural network parameters is achieved using the proposed RBFSGD filter. The parameter posterior is represented by $N=10$ particles, each updated through $M=15$ Fisher–Adam steps (Algorithm~\ref{alg:RBFSGD}) at every time instance with the step size of $\epsilon = 2 \times 10^{-2}$. For comparison, an Extended Kalman Filter (EKF), in which $f_\mathrm{nl}$ is considered as a part of the process noise, is also implemented as a baseline. All of the filters considered are initialized with the same parameter estimates.

Figure~\ref{fig:NNqLPV_state} depicts the state estimation results. The proposed RBFSGD filter exhibits better tracking accuracy (in terms of CRPS) compared to the EKF, as the nonliner term $f_\mathrm{nl}$ is progressively  learned by the RBFSGD filter from the measurement data. 

Figure~\ref{fig:NNqLPV_param} presents the performance of the trained neural network, trained online by RBFSGD filter, in estimating $f_\mathrm{nl}(x_{t,1}, x_{t,2}, x_{t,3})$ along with the filter performance in estimating the noise measurement covariance $R_t$. The filter is able to track the trend in the variation of $f_\mathrm{nl}$ and also to approximate   the true value of the covariance $R_t$.  

\begin{figure}[!t]
  \centering
  \includegraphics[scale=1]{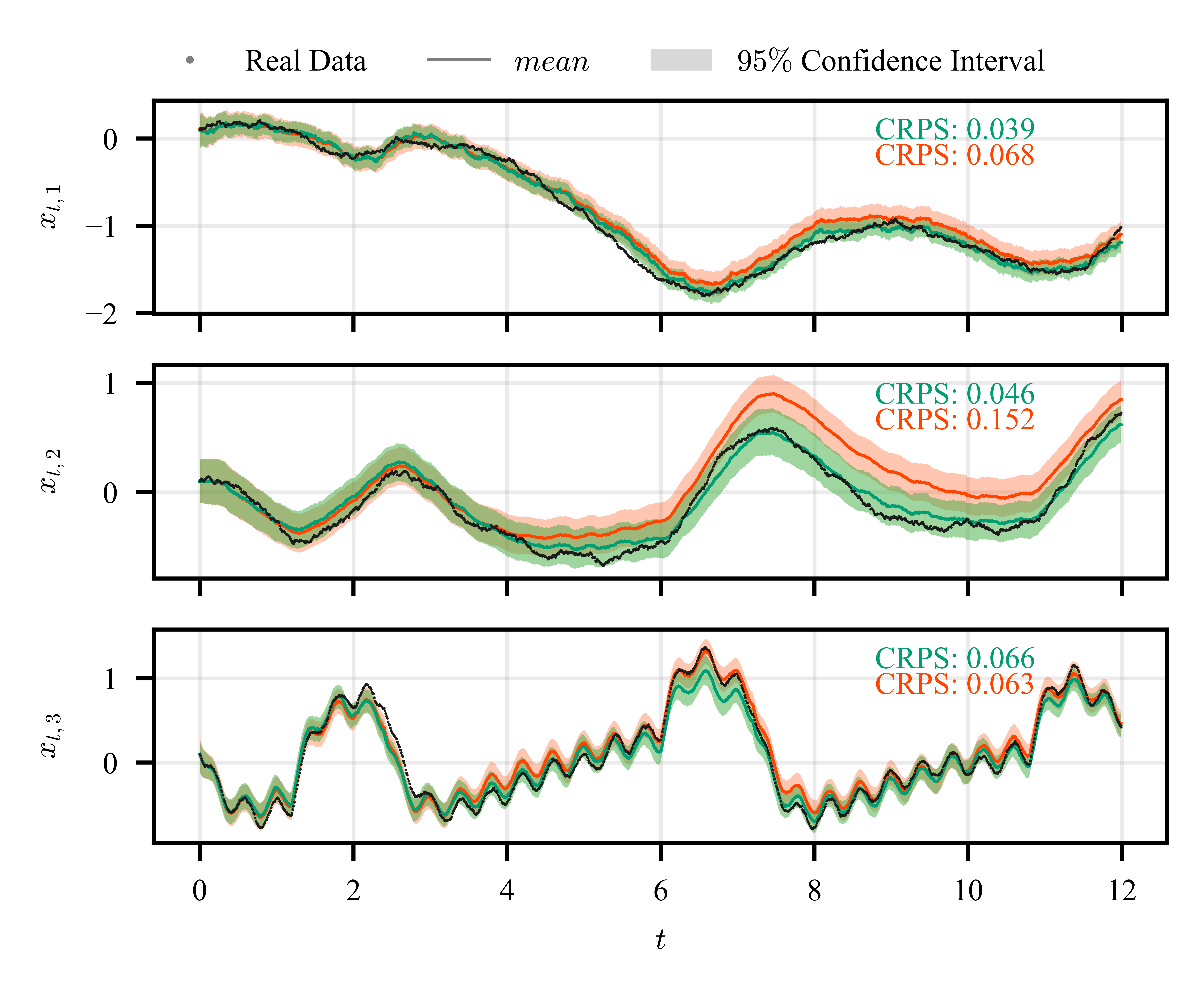}
  \caption{State estimation performance: true states (dotted black), RBFSGD filter mean (solid green) and uncertainty (shaded green), EKF mean (solid orange) and uncertainty (shaded orange).}
  \label{fig:NNqLPV_state}
\end{figure}

\begin{figure}[!t]
  \centering
  \includegraphics[scale=1]{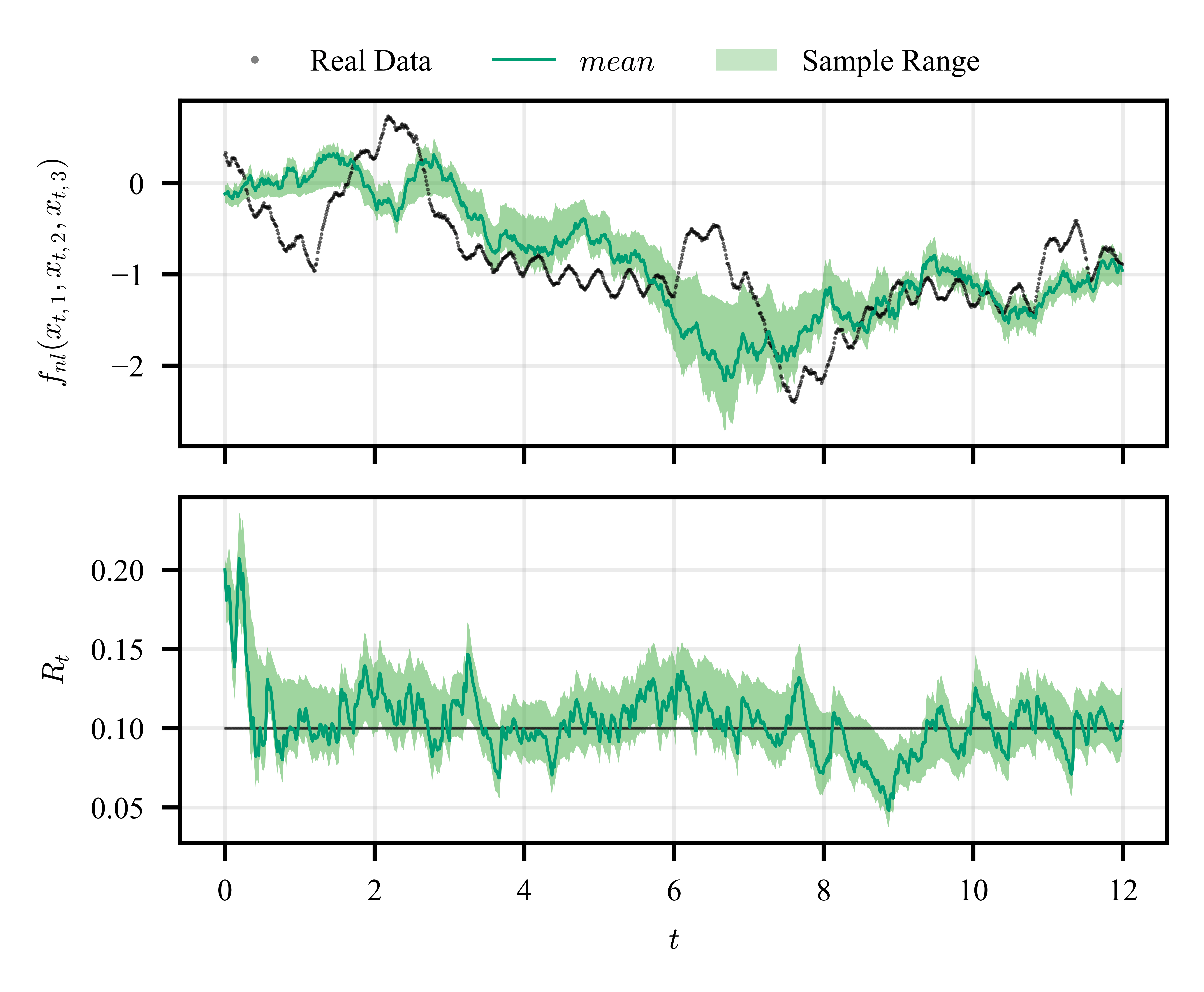}
  \caption{Parameter estimation performance for the nonlinear system: true parameter (dotted black), RBFSGD filter mean estimate (solid green), estimated uncertainty (shaded green) }
  \label{fig:NNqLPV_param}
\end{figure}

\section{Conclusion}

This work introduced a Rao–Blackwellized filtering framework that integrates deterministic SVGD updates for online Bayesian parameter inference with Kalman-based state estimation. The resulting RBSGD and RBFSGD filters enable efficient joint state–parameter estimation in nonlinear, time-varying systems, without requiring manual tuning of proposal distributions. 
 We established a theoretical bound linking the divergence between the true and approximate parameter posteriors to the total variation error in the marginal state distribution, providing a basic consistency guarantee for the proposed approximation. Two case studies demonstrated that the filters outperform classical EKF and RBPF benchmarks in terms of tracking accuracy and parameter identification, while yielding more expressive posterior representations.

Future research will investigate the integration of the estimated state–parameter distributions into probabilistic and robust control schemes.

\appendix
\subsection{Proof of Proposition~\ref{thm:svgd_one_step_formal}}\label{proof:svgd_one_step_formal}
The proof is based on the  SVGD contraction argument from  \cite[Theorem 3.1,  Lemma 3.2]{liu2016stein}. 

For brevity, in this proof, we drop the time index $t$ and only consider the iteration index $m$.  Recall that  $q_m$ is the distribution associated with updated parameters $\theta_m^{(i)}$ transported via  $\theta_{m}^{(i)} =\theta_{m-1}^{(i)} + \epsilon \phi^{\ast}_{q_{m-1},p}(\theta_{m-1}^{(i)})$. Thus, for a given set of parameters $\{\theta_{m-1}^{(i)}\}_{i=1}^N$, we can view $q_m$ as a function of $\epsilon$ and let us denote it as $q_m({\epsilon})$. Consequently, we note that the proposal at the previous step is $q_{m-1}= q_{m}(0)$.
A second-order Taylor expansion of the KL divergence under  smoothness assumptions yields,
\begin{align}
\label{eq:taylor_expansion}
\mathrm{KL}(q_{m}(\epsilon)\|\pi)
=&
\mathrm{KL}(q_{m-1}\|\pi)
+
\epsilon \left.\frac{d}{d\epsilon}
\mathrm{KL}(q_{m}(\epsilon)\|\pi)\right|_{\epsilon=0} \notag \\
&\quad +
\frac{\epsilon^2}{2}
\left.\frac{d^2}{d\epsilon^2}
\mathrm{KL}(q_{m}(\epsilon)\|\pi)\right|_{\epsilon=\xi},
\end{align}
for some $0 \leq \xi \leq \epsilon$.

Following  \cite[Lemma 3.2]{liu2016stein}, at the population limit ($N \to \infty$),  the first derivative equals
\begin{align}
\label{eq:D_lemma}
\left.\frac{d}{d\epsilon}
\mathrm{KL}(q_{m}(\epsilon)\|\pi)\right|_{\epsilon=0}
=
-\mathbb{D}(q_{m-1}\|\pi)^2,
\end{align}
where $\mathbb{D}(q_{m-1}\|\pi)^2$ is the Kernel Stein Discrepancy. Thus, when $q_{m-1}\ne\pi$, we can define $B_{m-1} > 0$ as the following:
\begin{equation}
\label{eq:B_def}
    B_{m-1} \vcentcolon= \mathbb{D}(q_{m-1}\|\pi)^2
\end{equation}
Boundedness of the Hessian of $\log\pi$
and smoothness of the kernel $k$ ensure the second derivative in \eqref{eq:taylor_expansion} is uniformly bounded
by a constant $C < \infty$. Thus via \eqref{eq:taylor_expansion}, \eqref{eq:D_lemma}, and \eqref{eq:B_def}, we have
\begin{align}
\label{eq:dKL_ineq}
\mathrm{KL}(q_{m}(\epsilon)\|\pi)
\le
\mathrm{KL}(q_{m-1}\|\pi)
-\epsilon B_{m-1} + \frac{\epsilon^2}{2} C,
\end{align}

Finally, by choosing
$\epsilon < \epsilon_0 = \frac{2B_{m-1}}{C}$, 
we obtain \eqref{eq:KL_statement}. This  completes the proof.

\bibliography{paper}


\end{document}